\documentclass[a4paper,twoside,11pt,reqno]{amsart}

\usepackage{amsmath,amsfonts,amsbsy,amsgen,amscd,mathrsfs,amssymb,amsthm,bm}
\usepackage{enumerate,mathtools}
\usepackage[margin=1.25in]{geometry}

\usepackage[usenames,dvipsnames]{xcolor}
\usepackage[colorlinks=true,citecolor=blue,linkcolor=blue]{hyperref}
\usepackage{comment}

\usepackage{tikz}
\usetikzlibrary{shadings}

\newtheorem{theorem}{Theorem}[section]
\newtheorem{lemma}[theorem]{Lemma}
\newtheorem{corollary}[theorem]{Corollary}
\newtheorem*{defn}{Definition}
\newtheorem*{prop*}{Proposition}

\newtheorem*{conj*}{Conjecture}

\newtheorem*{fact*}{Fact}
\newtheorem{prop}[theorem]{Proposition}
\newtheorem*{ex*}{Example}

\newtheorem{rem}[theorem]{Remark}

\theoremstyle{definition}

\newcommand*{\claimproofname}{Proof of claim}

\DeclareMathOperator*{\E}{\mathbb{E}}

\DeclareMathSymbol{\shortminus}{\mathbin}{AMSa}{"39}

\newcommand{\Inf}{\mathbf{Inf}}

\newcommand{\bsa}{\mathbf{BSA}}
\newcommand{\sens}[1]{s_{#1}}
\newcommand{\maxbsa}{\mathbf{MBSA}}
\newcommand{\maxbsar}{\mathbf{MRBSA}}
\newcommand{\maxas}{\mathbf{MAS}}
\newcommand{\maxasr}{\mathbf{MRAS}}
\newcommand{\as}{\mathbf{AS}}
\newcommand{\G}{\Pi}

\numberwithin{equation}{section}
\numberwithin{figure}{section}
\numberwithin{table}{section}

\newcommand{\bE}{\mathbb{E}}

\newcommand{\R}{\mathbf{R}}

\newcommand{\bP}{\mathbf{P}}

\renewcommand{\gg}{>\!\!>}
\DeclareMathOperator{\sgn}{\operatorname{sgn}}

\newcommand{\hg}{\textnormal{Hg}}

\newcommand{\var}{\textnormal{Var}}

\allowdisplaybreaks

\begin{document}

\title[Boolean surface area of PTFs]{The Boolean surface area of\\polynomial threshold functions}

\author[F. Chang]{Fan Chang}
\address{(F.C.) School of Statistics and Data Science, Nankai University, Tianjin, China and Extremal Combinatorics and Probability Group (ECOPRO), Institute for Basic Science (IBS), Daejeon, South Korea.}
\email{1120230060@mail.nankai.edu.cn.}

\author[J. Slote]{Joseph Slote}
\address{(J.S.) Department of Computer Science 
University of Washington, Seattle, 98195, USA }
\email{jslote@uw.edu}

\author[A. Volberg]{Alexander Volberg}
\address{(A.V.) Department of Mathematics, MSU, 
East Lansing, MI 48823, USA, and Max Planck Institute for Mathematics, Bonn, 5311, Germany }
\email{volberg@math.msu.edu}

\author[H. Zhang]{Haonan Zhang}
\address{(H.Z.) Department of Mathematics,
University of South Carolina,
Columbia, SC 29208, USA }
\email{haonanzhangmath@gmail.com}


\begin{abstract}
Polynomial threshold functions (PTFs) are an important low-complexity class of Boolean functions, with strong connections to learning theory and approximation theory.
Recent work on learning and testing PTFs has exploited structural and isoperimetric properties of the class, especially bounds on average sensitivity, one of the central themes in the study of PTFs since the Gotsman--Linial conjecture.

In this work we study PTFs through the lens of the Boolean surface area (or Talagrand boundary)
\[
\mathbf{BSA}[f]=\mathbb{E}|\nabla f|=\mathbb{E}\sqrt{s_{f}(x)},
\]
a natural measure of vertex-boundary complexity on the discrete cube.
Our main result is that every degree-$d$ PTF has polylogarithmic Boolean surface area:
\[
\mathbf{BSA}[f]\le C_d(\log(en))^{C_d}.
\]
The proof is based on the PTF Restriction Lemma of Kabanets, Kane, and Lu \cite{KKL2017} and proceeds through a tail bound for the pointwise sensitivity.
In particular, it controls all subcritical fractional moments of the sensitivity.
We also record a random block partition principle for Boolean surface area and an alternative recursive argument following Kane's work \cite{DK} on average sensitivity, which independently yields the weaker bound
\[
\mathbf{BSA}[f]\le \exp(C_d\sqrt{\log n}).
\]
\end{abstract}

\thanks{F.C. is supported by the NSFC under grant 124B2019 and the Institute for Basic Science (IBS-R029-C4). The research of A.V. is supported  by 
Max Planck Institute for Mathematics in Bonn and NSF grant DMS-2154402. H.Z is supported by NSF DMS-2453408. }

\subjclass[2010]{42C10 (primary), 30L15, 46B07, 60G46}

\keywords{Polynomial Threshold Functions, Influence, Boolean surface area}

\maketitle

\section{Introduction}

In this work we show a new constraint on the geometry of polynomial threshold functions by bounding their Boolean surface area.

\subsection*{Boolean surface area}
For any real-valued function $f$ on $\{-1,1\}^n$ and any $i\in [n]:=\{1,\dots, n\}$, its discrete derivative $D_i f$ is defined as
\begin{equation}
    D_i f(x)=\frac{f(x)-f(x^{\oplus i})}{2}, \qquad x=(x_1,\dots, x_n),
\end{equation}
where $x^{\oplus i}=(x_1,\dots, -x_i,\dots,x_n)$.
Boolean functions $f:\{-1,1\}^n\to \{-1,1\}$ play an essential role in theoretical computer science and other related areas, and one primary goal in this direction is to understand the structure of Boolean functions with small complexity.
In this paper, we do this with the so-called \emph{Boolean surface area}:
\[
\bsa[f]:= \bE|\nabla f|, \qquad\text{where}\qquad |\nabla f|=\sqrt{ \sum_{i=1}^n |D_i f|^2}\,.
\]
Here and in what follows, $\bE f=\bE f(x)$ with $x\sim \{-1,1\}^n$ being the uniform distribution.

A notion closely related to $\bsa$ is the \emph{total influence}.
Recall that the influence of the $i$-th coordinate of $f:\{-1,1\}^n\to \{-1,1\}$ is
\[
\Inf_i[f]:=\Pr_{x\sim \{-1,1\}^n}[f(x)\neq f(x^{\oplus i})] =\bE |D_if|^2,
\]
and the total influence is
\[
\Inf[f] =\sum_{i=1}^n \Inf_i[f]
=\bE |\nabla f|^2\,.
\]
Naively, one has the estimate
\begin{equation}\label{holder}
   \bsa[f] \le \sqrt{\Inf[f]}.
\end{equation}

While $\Inf[f]$ measures the size of the edge boundary of the set $A:=\{x:f(x)=-1\}$, $\bsa[f]$ gives some information about the \emph{vertex boundary} of $A$.
It is the central quantity in the works of Talagrand \cite{T} (building on Margulis \cite{M}) and Eldan--Gross \cite{EG}; see also \cite{EKLM}.
We call it the Boolean surface area in analogy to the Gaussian surface area (see Appendix~E of \cite{KODS} for an elaboration of this connection), but it is also called the \emph{Talagrand boundary} in \cite{EKLM}.
It is also the $\frac12$-moment of the sensitivity,
\begin{equation}
    \sens{f}(x):=\# \{i:f(x)\neq f(x^{\oplus i})\}=|\nabla f|^2(x).
\end{equation}
We remark that the total influence coincides with the \emph{average sensitivity}:
\[
\as[f]=\bE\, \sens{f}(x).
\]

\subsection*{The Boolean surface area of PTFs}
In this work we study $\bsa$ for Boolean functions $f$ computed by \emph{polynomial threshold functions} (PTFs) of small degree, namely
\[
f(x) := \sgn(p(x)),\quad x\in \{-1,1\}^n,
\]
where $p$ is a (multilinear) polynomial $\{-1,1\}^n\to \R$ of small degree.
In the following, we shall consider \emph{PTFs of degree $d$}, that is, $f=\sgn (p)$, and $p$ is a (multilinear) polynomial on $\{-1,1\}^n$ of degree at most $d$.
Here, $d$ is an integer that is small compared with the dimension $n$.
A special example is the Majority function
\begin{equation}
    \mathrm{MAJ}_n(x)=\sgn(x_1+\cdots +x_n),
\end{equation}
for which $\deg (p)=1$.

\bigskip

The well-known \emph{Gotsman--Linial conjecture} states that the extremal examples among degree-$d$ PTFs for total influence (average sensitivity) are symmetric polynomials that alternate signs around the middle levels of the discrete hypercube.
While this strong, structural formulation was proved false by Chapman \cite{C}, weaker versions of the Gotsman--Linial conjecture, about the maximum \emph{value} of total influence, remain open, with Kane's work \cite{DK} being the strongest step towards their proof.
In particular, Kane proved that \cite{DK}
\begin{equation}\label{kane}
\as[f]\le n^{1/2}(\log n)^{O(d\log d)}\cdot 2^{O(d^2\log d)}\,.
\end{equation}
A popularly conjectured bound is $\sqrt{n}\cdot O(d)$, or, more weakly $\sqrt{n}\cdot O_d(1)$ (see \cite{ODopen}).

\bigskip

This remarkable result of Kane is the main inspiration of the present work, and we consider an analogous problem for $\bsa$.
In the case of \emph{linear threshold functions} (LTFs), that is, $f=\sgn (p)$ and $p$ is linear, one might expect that because $\bsa(\mathrm{MAJ}_n)=\Theta(1)$, all LTFs must have constant $\bsa$.
However, this is not true; Klivans, O'Donnell, and Servedio proved in \cite{KODS} that the $\bsa$ of all LTFs are bounded by $\Theta(\sqrt{\log n})$, and this is optimal.
In particular, for $f = \sgn(\sum_i x_i/\sqrt{i})$, one has $\bsa[f]=\Theta(\sqrt{\log n})$.

For general $d\ge 2$, it seems that prior to this work, the best off-the-shelf upper bound comes by combining Jensen's inequality and Kane's average sensitivity bound \eqref{kane}:
\[
\bsa[f]\le \sqrt{\as[f]} \le n^{1/4}(\log n)^{O(d\log d)}\cdot 2^{O(d^2\log d)}\,.
\]

The main result of this paper is the following.
\begin{theorem}
    \label{thm:main-est}
    For every fixed $d\ge 1$, there exists a constant $K_d>0$ depending only on $d$ such that every degree-$d$ polynomial threshold function $f:\{-1,1\}^n\to\{-1,1\}$ satisfies
\begin{equation}
\label{polylog-main}
\bsa[f] \le 32(\log(en))^{2K_d+1}.
\end{equation}
\end{theorem}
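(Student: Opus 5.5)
The plan is to prove a tail bound for the pointwise sensitivity $s_f(x)$ and then integrate it. Concretely, I aim to show that for every $t\ge 1$,
\[
\Pr_x\bigl[s_f(x)\ge t\bigr]\le \frac{C(\log(en))^{K_d}}{t}\qquad(\ast)
\]
with an absolute constant $C$. Given $(\ast)$, the layer-cake formula yields
\[
\bsa[f]=\bE\sqrt{s_f}=\int_0^n \Pr[s_f\ge u^2]\,2u\,du\le 1+\int_1^{\sqrt n}\frac{2C(\log(en))^{K_d}}{u}\,du,
\]
and hence $\bsa[f]\lesssim (\log(en))^{K_d+1}$. The bound $(\ast)$ therefore has some slack, and the stated exponent $2K_d+1$ leaves room for losing an extra $(\log(en))^{K_d}$ in proving $(\ast)$. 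The same computation controls $\bE s_f^{\alpha}$ for every $\alpha<1$.

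To obtain $(\ast)$, I would use the PTF Restriction Lemma of Kabanets--Kane--Lu. As I recall it, it says the following: if the coordinates are randomly partitioned into $m$ blocks and each block is restricted by a random assignment, leaving one free "representative" variable per block (so $x_i=\sigma_i y_{b(i)}$), then with probability at least $1-\delta$ the restricted PTF on $\{-1,1\}^m$ is constant, or nearly so, except on a set of polylogarithmic "complexity". The useful consequence is that the restricted function $g$ depends on at most $(\log(en)/\delta)^{K_d}$ block variables, or has bounded sensitivity at a typical point.

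The link to sensitivity runs as follows. Fix $x$ with $s_f(x)=s$ and hash the coordinates into $m\approx t$ random blocks. A constant fraction of the sensitive coordinates at $x$ fall into distinct blocks, and flipping an entire block whose only sensitive coordinate is $i$ need not flip $f$. So I would instead use the "random block partition principle": with a random partition into $m$ blocks together with random signs, the point $x$ and its block-flips are uniform and marginally correctly distributed. This gives
\[
\bE_x\, s_f(x)\wedge t\ \lesssim\ \bE_{\text{partition}}\bE_y\, s_g(y)\wedge t,
\]
up to constants. The idea is that each of the $\min(s,m)$ sensitive coordinates lying alone in its block among sensitive ones contributes a sensitive block with constant probability, because the flips of the other (nonsensitive) coordinates in the block are compensated by the random restriction. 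I would first establish a clean inequality of the form $\Pr[s_f\ge t]\lesssim \Pr[s_g\ge ct]$ for $m\asymp t$.

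The main obstacle is exactly this comparison step: transferring sensitivity at $x$ to block-sensitivity of the restriction while keeping the distribution uniform. I would handle it by symmetrization. I choose the block signs uniformly and observe that $(x,x^{\oplus B})$ is distributed like an edge of a coarser cube. I then apply the Restriction Lemma with $\delta\asymp 1/t$, so that outside probability $1/t$ the function $g$ has sensitivity at most $(\log(en)\,t)^{K_d}$ restricted to its relevant part. After absorbing $\log t\le\log n$, this yields $(\ast)$ with at most an extra factor $(\log(en))^{K_d}$, which matches the $2K_d+1$ exponent and the absolute constant $32$ after bookkeeping.
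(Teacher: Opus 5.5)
Your overall architecture (a tail bound for $\sens{f}$ obtained from the restriction lemma, then integration) matches the paper's, but your intermediate bound $(\ast)$ is false, so the plan fails at its foundation. Take $d=1$ and $f=\mathrm{MAJ}_n$ with $n$ odd. Then $\sens{f}(x)=(n+1)/2$ whenever $|x_1+\cdots+x_n|=1$, an event of probability $\Theta(n^{-1/2})$. At $t=(n+1)/2$, however, $(\ast)$ predicts $O((\log n)^{K_1}/n)$. Equivalently, $(\ast)$ would give $\as[f]=\bE\,\sens{f}\lesssim(\log(en))^{K_d+1}$, contradicting $\as[\mathrm{MAJ}_n]=\Theta(\sqrt n)$. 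Your remark that $\bE\,\sens{f}^{\alpha}$ is controlled for all $\alpha<1$ fails for the same reason: $\bE\,\sens{\mathrm{MAJ}_n}^{\alpha}=\Theta(n^{\alpha-1/2})$, and the correct range is $\alpha<1/2$. The true tail is of order $m^{-1/2}$. This is exactly the $\sqrt r$ term in the Kabanets--Kane--Lu bound $\bP_{\rho\sim R_r}[f_\rho\text{ not }\delta\text{-close to constant}]\le(\sqrt r+\delta)(\log\frac1r\log\frac1\delta)^{K_d}$, which your application (``with $\delta\asymp 1/t$ \ldots\ outside probability $1/t$'') silently drops; that term cannot be removed (majority again). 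Your layer-cake step is also miscomputed: $\int_0^{\sqrt n}\bP[\sens{f}\ge u^2]\,2u\,du=\bE\,\sens{f}$, not $\bE\sqrt{\sens{f}}$. The correct identity is $\bE\sqrt X=\sum_{m\ge1}(\sqrt m-\sqrt{m-1})\bP[X\ge m]$, and with the true $m^{-1/2}$ tail it produces the harmonic sum $\sum_{m\le n}1/m\le\log(en)$. So there is no slack. The $+1$ in the exponent comes from this sum, and the $2K_d$ comes from $\log(1/r)\log(1/\delta)$ with $r=1/m$, $\delta=\sqrt m/n$.

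You have also misremembered the lemma. It concerns the ordinary restriction $R_r$ and only gives closeness to a constant; it says nothing about the restricted function depending on few variables or having bounded sensitivity. The block-flip comparison you call the main obstacle is unsupported: the claim that flips of non-sensitive coordinates in a block are ``compensated by the random restriction'' has no justification. It is also unnecessary. What is needed is two steps.

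(i) A coupling. Draw $x$ uniformly and, independently, $J\subseteq[n]$ keeping each coordinate with probability $1/m$, and fix the coordinates outside $J$ to $x$. Then $\rho_{x,J}\sim R_{1/m}$ with $x_J$ uniform on the live cube. If $\sens{f}(x)\ge m$, then with probability at least $1-1/e$ some sensitive coordinate of $x$ survives, so $x_J$ is a sensitive point of $f_{\rho}$. One surviving coordinate suffices; you need no block flips and no comparison of the form $\bP[\sens{f}\ge t]\lesssim\bP[\sens{g}\ge ct]$.

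(ii) A union bound. If $g$ on $\ell$ variables is $\delta$-close to a constant, then $\bP[\sens{g}\ge1]\le(\ell+1)\delta$. Because of the factor $\ell+1$ (with $\bE\ell=n/m$), $\delta$ must be polynomially small, $\delta=\sqrt m/n$. Your choice $\delta\asymp 1/t$ with $r=1/t$ would leave $\bE(\ell+1)\delta\approx n/t^2$, which is useless for $t\le\sqrt n$.

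Combining (i), (ii) and the lemma gives $\bP[\sens{f}\ge m]\le 8(\log(en))^{2K_d}/\sqrt m$ for $16\le m\le n$, which is the bound the theorem actually rests on.
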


The key input in our proof is the PTF Restriction Lemma of Kabanets, Kane, and Lu \cite{KKL2017}.
For $0<r<1$, let $R_r$ denote the random restriction that leaves each coordinate free with probability $r$, independently, and otherwise fixes it to a uniformly random sign.
Roughly speaking, the restriction lemma says that once only an $r$-fraction of variables remain alive, a degree-$d$ PTF typically collapses to an almost constant function, and the failure probability is controlled at the optimal $\sqrt r$ scale.

In our proof, we shall use the restriction lemma to control local boundary geometry.
The bridge is a tail estimate for the pointwise sensitivity.
Suppose that $\sens{f}(x)\ge m$.
Then there are at least $m$ coordinates whose flip changes the value of $f$ at $x$.
If we now apply a random restriction that keeps each coordinate alive with probability $1/m$, then the restricted function still has sensitivity at least $1$ at the surviving point with positive probability.
Another ingredient is an elementary combinatorial observation: if a Boolean function on $\ell$ variables is $\delta$-close to a constant, then the set of points where its sensitivity is at least $1$ has measure at most $(\ell+1)\delta$.
Combining these ingredients yields a tail estimate for $\sens{f}(x)$, and then the Boolean surface area follows by summing the tails via the elementary identity
\[
\mathbb{E}\sqrt{X}=\sum_{m\ge 1}\bigl(\sqrt m-\sqrt{m-1}\bigr)\bP[X\ge m],
\]
applied to the nonnegative integer-valued random variable $X=\sens{f}(x)$.
In fact, the same method controls all subcritical fractional moments $\bE\,\sens{f}(x)^\alpha$ for $0<\alpha<1/2$.

\medskip 

Sections~\ref{splitting} and \ref{proof ingredients} also develop a random block partition framework and a Kane-style recursive argument which independently yield the weaker estimate
\[
\bsa[f]\le \exp(C(d)\sqrt{\log n}).
\]
This second approach seems useful in its own right, even though it does not recover the polylogarithmic bound above.

The dependence on $d$ and the optimal polylogarithmic exponent remain mysterious.
Even for $d=1$, the sharp order is $\Theta(\sqrt{\log n})$ \cite{KODS}, and for higher degrees the structure of extremizers (or approximate extremizers) is not understood.
We also remark that the Gaussian version of this story is essentially fully understood: in \cite{DKgauss} Kane proved that the Gaussian surface area of degree-$d$ PTFs is at most $d/\sqrt{2\pi}$, which is sharp, including the constant.

As a corollary of our main theorem, we derive a bound on the noise sensitivity of PTFs of degree $d$.
Recall that the noise sensitivity of a Boolean function $f$ with parameter $\delta\in (0,1/2)$ can be written as
\begin{equation}
    \textnormal{NS}_{\delta}[f]=\frac{1}{2}\left(1-\bE [fP_t (f)]  \right),\qquad e^{-t}=1-2\delta,
\end{equation}
where $P_t=e^{t\Delta}$, $\Delta =-\sum_{j=1}^{n}D_j$ is the heat semigroup on the discrete hypercube.
In \cite[Corollary 1.3]{DK}, Kane obtained the bound
\begin{equation}
\label{viaNS1}
\textnormal{NS}_{\delta}[f]\le C(d)\,t^{1/2}  \Big(\log \frac1t\Big)^{c \,d \log d}
\end{equation}
for small $t\ge 0$ and $e^{-t}=1-2\delta$.

We derive the following estimate using Theorem~\ref{thm:main-est}.
\begin{corollary}\label{corollary}
    Let $f=\sgn(p)$ be a degree-$d$ polynomial threshold function on $\{-1,1\}^n$.
    Then for small $t\ge 0$ and $e^{-t}=1-2\delta$,
    \begin{equation}
\textnormal{NS}_{\delta}[f]\le  C\,\sqrt{t}\,\bsa[f]  \le C_d\, t^{1/2}(\log(en))^{C_d}.
    \end{equation}
\end{corollary}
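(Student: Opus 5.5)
The plan is to prove Corollary~\ref{corollary} by the standard semigroup argument. First express noise sensitivity through the semigroup. Since $f^2=1$,
\[
1-\bE[fP_tf]=\bE[f(f-P_tf)]\le \bE|f-P_tf|.
\]
Then write
\[
f-P_tf=-\int_0^t \frac{d}{ds}P_sf\,ds=-\int_0^t \Delta P_s f\,ds .
\]
This reduces the problem to bounding $\bE|\Delta P_s f|$ pointwise in $s$, in terms of $\bsa[f]$. We need this bound to scale like $s^{-1/2}$, so that integrating over $[0,t]$ produces the factor $\sqrt t$.

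The key step is an inequality of the form
\[
\bE|\Delta P_sf|\le \frac{C}{\sqrt{s}}\,\bE|\nabla f|\qquad (0<s\le 1).
\]
I plan to prove this by duality. We have $\bE|\Delta P_s f|=\sup_{\|g\|_\infty\le1}\bE[g\,\Delta P_sf]$. Using self-adjointness and the integration by parts $\bE[g\,\Delta h]=-\sum_j\bE[D_jg\,D_jh]$, this becomes
\[
\bE[g\,\Delta P_s f]=-\sum_j \bE\big[D_jP_sg\cdot D_jf\big]\le \bE\big[|\nabla P_s g|\,|\nabla f|\big].
\]
So it suffices to have the reverse-type $L^\infty$ gradient estimate $\|\nabla P_s g\|_\infty\le C s^{-1/2}\|g\|_\infty$ for $s\le 1$.

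This gradient estimate is the main obstacle. I will argue it coordinatewise. Writing $P_s$ as the noise operator with correlation $\rho=e^{-s}$, we have $D_jP_sg(x)=\rho\,\bE_y[x_j\cdot(\text{stuff})]$, which leads to the representation
\[
\partial_jP_sg(x)=\frac{\rho}{1-\rho^2}\,\bE_{y\sim N_\rho(x)}\big[(y_j-\rho x_j)\,g(y)\big].
\]
Here $y_j$ is independent across coordinates with mean $\rho x_j$. Pairing with a unit vector $a$ and applying Cauchy--Schwarz, together with independence of the centered variables $y_j-\rho x_j$ (each of variance $1-\rho^2$), gives
\[
|\nabla P_sg(x)|\le \frac{\rho}{\sqrt{1-\rho^2}}\,\|g\|_\infty\le \frac{C}{\sqrt s}\|g\|_\infty .
\]
This is the discrete analogue of the Gaussian reverse-Poincaré bound. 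The multilinear derivative $\partial_j$ equals $x_jD_j$ up to sign, so $|D_j|$ is unaffected.

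Combining the steps,
\[
\textnormal{NS}_\delta[f]\le \tfrac12\int_0^t \frac{C}{\sqrt s}\,\bsa[f]\,ds = C'\sqrt t\,\bsa[f].
\]
For the last inequality of the corollary, insert Theorem~\ref{thm:main-est}, which gives $\bsa[f]\le 32(\log(en))^{2K_d+1}$. Taking $C_d$ large absorbs the constants, and the restriction "small $t$" only ensures $s\le t\le1$ in the gradient bound.
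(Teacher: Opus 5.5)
Your proof is correct and is essentially the paper's argument. The representation you derive, rewritten as $D_jP_sg(x)=\frac{e^{-s}}{1-e^{-2s}}\,\bE_\xi\bigl[(\xi_j-e^{-s})\,g(x\xi)\bigr]$ with independent $\xi_j=\pm1$ of mean $e^{-s}$, is exactly the key formula from \cite{IVHV} that the paper invokes, and integrating $\bE|\Delta P_sf|\le (e^{2s}-1)^{-1/2}\,\bsa[f]$ over $[0,t]$ is the intended proof. You use the formula dually (an $L^\infty$ bound on $|\nabla P_sg|$ plus integration by parts), whereas the IVHV route applies it to $D_jf$ via $D_j^2=D_j$ and uses that $x\xi$ is uniform and independent of $\xi$. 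Both give the same constant $(e^{2s}-1)^{-1/2}\le (2s)^{-1/2}$, so in fact no smallness of $t$ is needed.
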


In Section~\ref{main-proof} we prove Theorem~\ref{thm:main-est} and Corollary~\ref{corollary}.
Section~\ref{splitting} develops the random block partition idea.
Section~\ref{proof ingredients} sketches a Kane-style proof of the weaker bound, with the deferred calculations collected in Appendix~\ref{app:weak-calcs}.
Appendix~\ref{boundary} discusses the geometry of the boundary when $\bsa$ is small.

\section{Main result and proofs}
\label{main-proof}

We now prove Theorem~\ref{thm:main-est}.
For a parameter $0 < \delta < 1$, we say that a Boolean function $f$ is $\delta$-close to a constant if, for some $a\in\{-1,1\}$, we have $f(x)=a$ for all but at most a $\delta$ fraction of Boolean inputs $x$.
We will use the following restriction lemma of Kabanets, Kane, and Lu.

\begin{lemma}[PTF Restriction Lemma {\cite[Lemma 1.5]{KKL2017}}]\label{lem:restriction}
Fix $d\ge 1$.
Then there exists a constant $K_d>0$ depending only on $d$ such that the following holds.
Let $f:\{-1,1\}^n\to\{-1,1\}$ be a degree-$d$ polynomial threshold function.
For every $0<r,\delta\le \frac{1}{16}$,
\begin{equation}
\bP_{\rho\sim R_r}\bigl[f_\rho \text{ is not }\delta\text{-close to a constant}\bigr]
\le(\sqrt r+\delta)\left(\log\left(\frac{1}{r}\right)\cdot\log\left(\frac{1}{\delta}\right)\right)^{K_d},
\end{equation}
where $R_r$ denotes the random restriction that leaves each coordinate free (unrestricted) with probability $r$, independently, and otherwise fixes it to a uniformly random sign.
\end{lemma}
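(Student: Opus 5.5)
Since Lemma~\ref{lem:restriction} is imported from \cite{KKL2017}, I only sketch how I would reprove it; the plan follows the standard regularity/invariance paradigm for PTFs. Write $\rho=(J,z)$, where $J$ is the live set and $z$ fixes the coordinates outside $J$. Then $p_\rho(y)=p(z,y)$. I split it as $p_\rho = q_z + h_z(y)$, where $q_z$ is the constant term in the live variables and $h_z$ has mean zero. The event that $f_\rho$ is not $\delta$-close to a constant is contained in
\[
\{|q_z|\le T\,\|h_z\|_2\}\ \cup\ \{\bP_y[|h_z(y)|>T\|h_z\|_2]>\delta\},
\]
with $T\approx(\log(1/\delta))^{d/2}$. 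By hypercontractivity, $\bP_y[|h_z|>T\|h_z\|_2]\le\delta$, so the second event is empty. Everything therefore reduces to an anticoncentration estimate for $q_z$ at scale $(\log(1/\delta))^{d/2}\,\|h_z\|_2$. The $+\delta$ term in the bound is the price of this truncation.

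\textbf{Regular case.} First I treat $p$ with all influences at most $\tau$ relative to $\var[p]$. Typically $\|h_z\|_2^2\approx r\cdot(\text{polylog})\cdot\var[p]$. The naive route applies Carbery--Wright to $q_z$, via the invariance principle. This gives only $\bP[|q_z|\le \sqrt r\,\|p\|_2]\lesssim r^{1/(2d)}$, which is far from the required $\sqrt r$.

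To recover the sharp $\sqrt r$ I would instead compare with Gaussian noise sensitivity. By invariance, the restriction behaves like a pair of $\sqrt{1-r}$-correlated Gaussians $(G,G')$, i.e.\ one fixes $G$ and resamples an $r$-fraction of Gaussian mass. Kane's sharp Gaussian surface area bound \cite{DKgauss} gives $\bP[\sgn p(G)\ne\sgn p(G')]=O(d\sqrt r)$. This, however, bounds only the expected variance of $f_\rho$, and Markov would lose a factor $1/\delta$. To avoid that loss I would apply Kane's localized argument to the event $\{|q_z|\le T\|h_z\|_2\}$ itself: condition along a random line or direction and use the one-dimensional fact that a degree-$d$ polynomial has at most $d$ sign changes. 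This should give probability $O(d\,T\sqrt r)$ times polylog factors. Checking that the Gaussian argument applies to this thickened boundary event, rather than just to sign disagreement, is the step I would verify first.

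\textbf{General case and main obstacle.} For general $p$ I would use the regularity lemma for PTFs, in its critical-index or decision-tree form. After restricting a set $H$ of $(\log(1/r)\log(1/\delta))^{O_d(1)}$ head variables, all but a small fraction of leaves are either $\tau$-regular or already $\delta$-close to a constant. Head variables that remain alive under $R_r$ can be absorbed by a union bound at cost $r|H|\le\sqrt r\cdot\mathrm{polylog}$. Fixed head variables simply select a leaf, where the regular case applies.

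I expect the main obstacle to be the sharp $\sqrt r$ in the regular case. The invariance error must be kept below $\sqrt r$, which forces $\tau\le r^{O(d)}$. In turn the regularity lemma's depth must stay polylogarithmic, and not polynomial, in $1/\tau$. Getting this requires the refined, Kane-style regularity decomposition rather than the basic one; it is exactly where the exponent $K_d$ comes from.
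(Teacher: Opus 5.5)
The paper gives no proof of this lemma; it is imported verbatim from \cite{KKL2017}. So your sketch has to stand on its own, and as written it does not. Your first step is sound: with $T=(C\log(1/\delta))^{d/2}$, hypercontractivity shows that $f_\rho$ can fail to be $\delta$-close to a constant only if $|q_z|\le T\|h_z\|_2$.

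The regular-case estimate $\bP[|q_z|\le T\|h_z\|_2]\lesssim T\sqrt r\cdot\mathrm{polylog}$, which is the real content of the lemma, is only asserted. It also does not follow from Kane's line argument as stated. Your event is defined through the random live set $J$ (note $q_z=p(z,0_J)$), and a coordinate restriction is not a rotation of a Gaussian pair. To use that argument you would first have to dominate the event, via Paley--Zygmund, by a two-point event such as $\{|p(x)|\le T'|p(x)-p(x')|\}$ for the $r$-rerandomized pair $(x,x')$. You would then need an invariance principle for that non-smooth event, and finally a root count for $\theta\mapsto p(G\cos\theta+W\sin\theta)$. None of this is carried out.

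The decisive problem is the general case, which fails as designed. The invariance principle has errors of order $\tau^{\Theta(1/d)}$ (compare the $d^4\tau^{1/(8d)}$ term in Proposition~\ref{prop1}). Keeping this below $\sqrt r$ therefore forces $\tau\le r^{O(d)}$. Your union bound over live head variables costs $r$ times the expected path length of the regularizing decision tree. Kane's decomposition, quoted in the proof of Proposition~\ref{prop2}, has depth $\tau^{-1}(d\log(1/\tau)\log(1/\epsilon))^{O(d)}$, and no decomposition can do better than this polynomial dependence. For example, take $\sgn(\sum_{i\le k}x_i+\varepsilon\sum_{i>k}x_i)$ with $k\approx 1/(2\tau)$ and $\varepsilon$ tiny. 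A leaf with any free head variable is never $\tau$-regular, and it is not close to a constant until almost all $k$ head variables are fixed. So typical paths have length $\gtrsim 1/\tau$, and your union bound costs at least $r^{1-O(d)}\gg 1$. The polylog-depth ``refined decomposition'' in your last paragraph is therefore not available.

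The missing idea is the recursive scheme of \cite{DK} (cf.\ Section~\ref{proof ingredients} of the paper), which \cite{KKL2017} build on. There the restriction is assembled from mild stages, for instance $R_r$ written as a composition of restrictions $R_{r_i}$ with $\prod_i r_i=r$, or random block restrictions with few blocks. Regularity is only required to the precision of the current stage. A potential such as Kane's $\alpha(p)$ is tracked, and it drops by roughly $b^{-1/2}$ per stage in the regular case (Proposition~\ref{prop1}). A ``small $\alpha$ implies nearly constant'' lemma (cf.\ Lemma~\ref{lem:small alpha}) then finishes the argument. In that scheme the polylogarithmic factors come from the number of stages, not from a better regularity lemma.
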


\begin{lemma}\label{lem:close-constant-implies-few-sensitive}
Let $g:\{-1,1\}^\ell\to\{-1,1\}$ be $\delta$-close to a constant.
Then
\[
\bP[\sens{g}(y)\ge 1]\le (\ell+1)\delta.
\]
\end{lemma}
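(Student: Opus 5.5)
Let $a\in\{-1,1\}$ be the constant for which $B:=\{y: g(y)\neq a\}$ satisfies $|B|\le \delta 2^\ell$. The plan is a direct counting argument: I will show that every sensitive point either lies in $B$ or is a neighbor of a point of $B$, and then bound the size of this neighborhood.

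\emph{Step 1: locate the sensitive points.} Suppose $\sens{g}(y)\ge 1$, so there is some $i$ with $g(y)\neq g(y^{\oplus i})$. Since $g$ takes only the values $\pm1$, exactly one of $g(y)$ and $g(y^{\oplus i})$ differs from $a$. So either $y\in B$, or $y^{\oplus i}\in B$ and hence $y$ is a Hamming neighbor of a point of $B$. In other words, the set of sensitive points is contained in
\[
B\cup\bigcup_{z\in B}\{z^{\oplus 1},\dots,z^{\oplus \ell}\}.
\]

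\emph{Step 2: count.} Each $z\in B$ contributes itself and at most $\ell$ neighbors. A union bound therefore gives that the number of sensitive points is at most $(\ell+1)|B|\le(\ell+1)\delta 2^\ell$. Dividing by $2^\ell$ yields
\[
\bP[\sens{g}(y)\ge 1]\le (\ell+1)\delta.
\]

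\emph{Expected difficulty.} There is no real obstacle here. The only care needed is to use the fact that $g$ is $\pm1$-valued, which forces one endpoint of each sensitive edge to lie in $B$. The union bound is crude, since it may overcount points adjacent to several elements of $B$, but this slack is harmless because the lemma only asks for $(\ell+1)\delta$.
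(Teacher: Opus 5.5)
Your proof is correct and is essentially the paper's argument. The paper covers the sensitive set by $A\cup\bigcup_{j=1}^{\ell}\{y: y^{\oplus j}\in A\}$ and bounds each piece by $\delta$, using that $y\mapsto y^{\oplus j}$ is a bijection; this is the same cover as yours, indexed by the direction $j$ rather than by the point $z\in B$, and the union bound gives $(\ell+1)\delta$ either way.
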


\begin{proof}
Choose $a\in\{-1,1\}$ such that $A:=\{y\in\{-1,1\}^\ell: g(y)\neq a\}$ has measure at most $\delta$.
If $\sens{g}(y)\ge 1$, then there exists $j\in[\ell]$ such that $g(y)\neq g(y^{\oplus j})$.
Hence at least one of the points $y$ and $y^{\oplus j}$ lies in $A$.
Therefore
\[
\{y:\sens{g}(y)\ge 1\}
\subseteq
A\cup \bigcup_{j=1}^\ell \left\{y:y^{\oplus j}\in A\right\}.
\]
The map $y\mapsto y^{\oplus j}$ is a bijection of $\{-1,1\}^\ell$, so each set $\{y:y^{\oplus j}\in A\}$ also has measure at most $\delta$.
A union bound now gives
\[
\bP[\sens{g}(y)\ge 1]\le (\ell+1)\delta.
\qedhere
\]
\end{proof}

\begin{prop}\label{prop:tail}
Let $f:\{-1,1\}^n\to\{-1,1\}$ be a degree-$d$ PTF, and let $K_d$ be as in Lemma~\ref{lem:restriction}.
Assume $n\ge 256$.
Then for every integer $m$ with $16\le m\le n$,
\[
\bP[\sens{f}(x)\ge m]
\le \frac{8(\log(en))^{2K_d}}{\sqrt{m}}.
\]
\end{prop}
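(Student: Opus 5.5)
We will compare two estimates of the same probability,
\[
q:=\bP_{\rho\sim R_r,\,y}\bigl[\sens{f_\rho}(y)\ge 1\bigr],
\]
where $r=1/m$ and $y$ is a uniform point of the cube on the free coordinates of $\rho$. The lower bound on $q$ will come from the sensitive points of $f$, and the upper bound from Lemma~\ref{lem:restriction} together with Lemma~\ref{lem:close-constant-implies-few-sensitive}.

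\emph{Lower bound.} The pair $(\rho,y)$ determines a point $x\in\{-1,1\}^n$: take the fixed coordinates from $\rho$ and the free ones from $y$. This $x$ is uniform, and the set $S$ of free coordinates is independent of $x$, with each coordinate included independently with probability $r$. If $j\in S$ and $j$ is sensitive for $f$ at $x$, then $\sens{f_\rho}(y)\ge 1$. Hence, conditioned on $\sens{f}(x)\ge m$,
\[
\bP\bigl[\sens{f_\rho}(y)\ge1\bigr]\ \ge\ 1-(1-1/m)^m\ \ge\ 1-e^{-1}\ \ge\ \tfrac12 .
\]
Therefore $q\ge \tfrac12\,\bP[\sens{f}(x)\ge m]$.

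\emph{Upper bound.} Let $\ell=|S|$. On the event that $f_\rho$ is $\delta$-close to a constant, Lemma~\ref{lem:close-constant-implies-few-sensitive} bounds the conditional probability by $(\ell+1)\delta$. Taking expectations,
\[
q\ \le\ \bP\bigl[f_\rho\text{ not }\delta\text{-close}\bigr]+\delta\,\bigl(\bE\ell+1\bigr)\ \le\ \bP\bigl[f_\rho\text{ not }\delta\text{-close}\bigr]+\delta\Bigl(\frac nm+1\Bigr).
\]
We will choose $\delta$ so that $\delta(n/m+1)\lesssim 1/\sqrt m$ while $\log(1/\delta)$ stays $O(\log(en))$. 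The choice $\delta=1/(2n)$ works. It satisfies $\delta\le 1/16$, and since $m\le n$ it gives
\[
\delta\Bigl(\frac nm+1\Bigr)\le \frac1m\le \frac1{\sqrt m}.
\]
Also $r=1/m\le 1/16$ because $m\ge16$. Lemma~\ref{lem:restriction} then gives
\[
\bP\bigl[f_\rho\text{ not }\delta\text{-close}\bigr]\le \Bigl(\frac1{\sqrt m}+\frac1{2n}\Bigr)\bigl(\log m\,\log(2n)\bigr)^{K_d}.
\]

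\emph{Constants.} Bound $\log m\le\log(en)$ and $\log(2n)\le \log(en)$, and $1/(2n)\le 1/(2\sqrt m)$. This gives
\[
q\le \frac{1.5(\log(en))^{2K_d}+1}{\sqrt m}\le \frac{4(\log(en))^{2K_d}}{\sqrt m},
\]
using $\log(en)\ge1$, so $(\log(en))^{2K_d}\ge 1$. Multiplying by $2$ from the lower bound yields the claimed $8(\log(en))^{2K_d}/\sqrt m$.

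The step needing the most care is the coupling in the lower bound. One must check that the free set $S$ is independent of the resulting uniform point $x$, so that conditioning on $\sens{f}(x)\ge m$ is legitimate. One must also make sure the choice of $\delta$ controls the $\ell\delta$ loss, via $\bE\ell=n/m$, without spoiling the logarithmic factor. The hypothesis $n\ge256$ is not needed beyond guaranteeing that the range of $m$ is nonempty and that $\delta\le 1/16$.
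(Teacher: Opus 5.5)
Your proof is correct and follows the paper's argument essentially verbatim. You use the same coupling of a uniform point with an independent $1/m$-biased free set to get $\bP[\sens{f}(x)\ge m]\le 2q$, and the same upper bound on $q$ via Lemma~\ref{lem:restriction} and Lemma~\ref{lem:close-constant-implies-few-sensitive}. The only difference is your choice $\delta=1/(2n)$ in place of the paper's $\delta=\sqrt m/n$; it handles the constants equally well and only needs $n\ge 16$, rather than $n\ge 256$, to guarantee $\delta\le 1/16$.
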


\begin{proof}
Fix $m\in\{16,\dots,n\}$.
Sample $x\in\{-1,1\}^n$ uniformly at random.
Independently, sample a random set $J\subseteq[n]$ by keeping each coordinate with probability $\frac{1}{m}$, independently.
From $(x,J)$, form the restriction $\rho_{x,J}\in\{-1,1,*\}^n$ by
\[
(\rho_{x,J})_i=
\begin{cases}
* & \text{if } i\in J,\\
x_i & \text{if } i\notin J.
\end{cases}
\]
Then $\rho_{x,J}$ has exactly the distribution $R_{1/m}$, and conditional on $\rho_{x,J}$, the vector $x_J$ is uniform on the live coordinates of $\rho_{x,J}$.

Define the events
\[
E:=\{x:\sens{f}(x)\ge m\},
\qquad
B:=\left\{(x,J): \sens{f_{\rho_{x,J}}}(x_J)\ge 1\right\}.
\]
For a fixed $x\in E$, let $S(x):=\{i\in[n]: f(x)\neq f(x^{\oplus i})\}$.
Then $|S(x)|\ge m$.
If $J\cap S(x)\neq\emptyset$, choose $i\in J\cap S(x)$.
Since $i$ remains live under the restriction $\rho_{x,J}$, the point $x_J$ in the restricted cube corresponds to the original point $x$, and flipping the $i$-th live coordinate sends $x_J$ to the restricted point corresponding to $x^{\oplus i}$, while all non-live coordinates remain fixed to their values in $x$.
Hence $f_{\rho_{x,J}}(x_J)=f(x)\neq f(x^{\oplus i})=f_{\rho_{x,J}}\bigl((x_J)^{\oplus i}\bigr)$, so $\sens{f_{\rho_{x,J}}}(x_J)\ge 1$.
Therefore
\[
\bP_J[B\mid x]\ge\bP_J[J\cap S(x)\neq\emptyset]=1-\left(1-\frac{1}{m}\right)^{|S(x)|}\ge1-\left(1-\frac{1}{m}\right)^m\ge 1-\frac{1}{e}>\frac{1}{2}.
\]
Averaging over $x\in\{-1,1\}^n$, we obtain
\[
\bP[B]=\mathbb{E}_x\bP_J(B\mid x) \ge \frac12\bP[E].
\]

Now set $\delta:=\frac{\sqrt m}{n}$.
Since $n\ge 256$ and $m\le n$, we have $0<\delta\le \frac{1}{16}$, so Lemma~\ref{lem:restriction} applies.
Condition on a restriction $\rho=\rho_{x,J}$, and let $\ell$ be the number of live variables of $\rho$.
If $f_\rho$ is $\delta$-close to a constant, then Lemma~\ref{lem:close-constant-implies-few-sensitive} gives
\[
\bP\bigl[B\mid \rho\bigr]=\bP_{y\in\{-1,1\}^\ell}[\sens{f_\rho}(y)\ge 1]\le(\ell+1)\delta.
\]
Hence, without conditioning,
\begin{equation*}
    \begin{split}
\bP[B]&\le\bP_{\rho\sim R_{1/m}}\bigl[f_\rho \text{ is not }\delta\text{-close to a constant}\bigr]+ \delta\cdot\E[\ell+1]\\
&=\bP_{\rho\sim R_{1/m}}\bigl[f_\rho \text{ is not }\delta\text{-close to a constant}\bigr]+ \delta\cdot\left(\frac{n}{m}+1\right),
    \end{split}
\end{equation*}
since $\ell\sim \mathrm{Bin}(n,1/m)$ and $\E\ell=\frac{n}{m}$.
Applying Lemma~\ref{lem:restriction} with $r=\frac{1}{m}$ and $\delta=\frac{\sqrt m}{n}$, we get
\begin{equation*}
    \begin{split}
\bP[B]&\le\left(\frac{1}{\sqrt{m}}+\delta\right)\cdot\left(\log m\cdot\log\left(\frac{n}{\sqrt m}\right)\right)^{K_d}+\frac{1}{\sqrt m}+\frac{\sqrt m}{n}\\
&\le\frac{2}{\sqrt{m}}(\log(en))^{2K_d}+\frac{2}{\sqrt{m}}\le\frac{4}{\sqrt{m}}(\log(en))^{2K_d},
    \end{split}
\end{equation*}
as $(\log(en))^{2K_d}\ge 1$.
Since $\bP[E]\le 2\bP[B]$, we conclude that
\[
\bP[\sens{f}(x)\ge m]=\bP[E]\le
\frac{8}{\sqrt{m}}(\log(en))^{2K_d}.
\qedhere
\]
\end{proof}

\begin{proof}[Proof of Theorem~\ref{thm:main-est}]
As $|\nabla f|(x)=\sqrt{\sens{f}(x)}$, it is enough to bound $\E\sqrt{\sens{f}(x)}$.

If $n<256$, then $\sens{f}(x)\le n$ pointwise and we are done.
Thus we may assume $n\ge 256$.
For every nonnegative integer-valued random variable $X$,
\begin{equation}
\E\sqrt{X}=\sum_{m=1}^{\infty} \left(\sqrt m-\sqrt{m-1}\right)\bP[X\ge m].
\end{equation}
Applying this with $X=\sens{f}(x)$, and noting that $\sens{f}(x)\le n$, we get
\[
\E\sqrt{\sens{f}(x)}=\sum_{m=1}^{n}\bigl(\sqrt m-\sqrt{m-1}\bigr)\bP[\sens{f}(x)\ge m].
\]
The contribution of $m\le 15$ is at most $\sum_{m=1}^{15}\bigl(\sqrt m-\sqrt{m-1}\bigr)=\sqrt{15}<4$.
For $m\ge 16$, Proposition~\ref{prop:tail} and the bound
\[
\sqrt m-\sqrt{m-1}=\frac{1}{\sqrt m+\sqrt{m-1}}\le\frac{1}{\sqrt m}
\]
give
\[
\sum_{m=16}^{n}\bigl(\sqrt m-\sqrt{m-1}\bigr)\bP[\sens{f}(x)\ge m]
\le
8(\log(en))^{2K_d}\sum_{m=16}^{n}\frac1m.
\]
Since $\sum_{m=16}^{n}\frac1m\le \log n\le \log(en)$, we have
\[
\bsa[f]\le 4+8(\log(en))^{2K_d+1}\le32(\log(en))^{2K_d+1},
\]
because $\log(en)\ge 1$.
\end{proof}

\begin{rem}
The tail bound above immediately controls all subcritical fractional moments of the sensitivity:
for every fixed $0<\alpha<\frac{1}{2}$,
\[
\E \sens{f}(x)^\alpha \le C_{\alpha,d}(\log(en))^{2K_d}.
\]
Thus the present method controls an entire family of fractional boundary functionals, not only $\bsa[f]$.
\end{rem}

\begin{proof}[Proof of Corollary~\ref{corollary}]
The proof is an immediate consequence of Theorem~\ref{thm:main-est} and the following estimate for Boolean $f$:
\begin{equation}
    2\textnormal{NS}_{\delta}[f]=\bE|f-P_t(f)|\le C\sqrt{t}\,\bsa[f].
\end{equation}
Here, the inequality follows from the key formula of $D_j P_t f$ obtained in \cite{IVHV}.
For the equality, note that for Boolean $f$, $P_t f\in [-1,1]$, so
\[
\bE|f-P_t (f)|
=\bE|1-f P_t (f)|
= 1- \bE[f P_t (f)]
=2\textnormal{NS}_{\delta}[f].
\]
\end{proof}
\section{The random block partition idea}
\label{splitting}

\subsection{The special case: equal partition}

Let $\{y_j\}_{j=1}^n$ be a sequence of $k$ zeros and $n-k$ ones. Let $n=bm$ with $b,m\ge 1$ being integers. 
We wish to compare 
\begin{equation}\label{eq:AB}
A=\sqrt{\sum_{j=1}^{n} y_j}=\sqrt{n-k}\quad \text{and}\quad B=\frac{1}{\sqrt{b}}\bE_{\G}\sum_{\ell=1}^b \sqrt{\sum_{j\in \G_\ell} y_j}\,,
\end{equation}
where $\bE_{\G}$ is with respect to all partitions $\G=\{\G_1,\dots, \G_b\}$ of $[n]$ each having exactly $m$ elements. For any fixed splitting, applying the elementary estimate 
$$
\frac{1}{\sqrt{b}}\sum_{\ell=1}^{b}x_\ell\le \sqrt{\sum_{\ell=1}^{b}x_\ell^2}\le \sum_{\ell=1}^{b}x_\ell
$$
to $x_\ell=\sqrt{\sum_{j\in G_\ell}y_j}$ yields 
$$
B\le A\le \sqrt{b}B.
$$
It turns out that by taking the average $\bE_{\G}$ over all splittings, we can improve the upper bound to match the lower bound up to a small error.




\begin{prop}\label{prop:AvsB}
Under the above notation \eqref{eq:AB}, we have
\begin{equation}\label{eq:AvsB}
B\le A\le B+b
\end{equation}
for all $0\le k\le n$ and $n=mb$.
\end{prop}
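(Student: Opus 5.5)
The lower bound $B\le A$ was already obtained above for every fixed partition, so averaging over $\G$ preserves it. The content is the upper bound $A\le B+b$. I would prove it by computing $B$ essentially exactly. By symmetry of the uniform random equipartition, every block $\G_\ell$ is marginally a uniformly random $m$-subset of $[n]$. Put $N:=n-k$, the number of ones. Then $X:=\sum_{j\in\G_\ell}y_j$ is hypergeometric with parameters $(n,N,m)$, mean $\mu=mN/n=N/b$, and variance $\sigma^2\le \mu$. Hence
\[
B=\frac{1}{\sqrt b}\cdot b\,\bE\sqrt{X}=\sqrt b\,\bE\sqrt X ,
\]
and the inequality reduces to a lower bound on $\bE\sqrt X$ of the form $\sqrt\mu-O(1)$ in suitable units, since $\sqrt b\sqrt\mu=\sqrt N=A$.

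The key step is a reverse-Jensen estimate for $\sqrt{\cdot}$ applied to a concentrated variable. Write
\[
\sqrt\mu-\sqrt X=\frac{\mu-X}{\sqrt\mu+\sqrt X}.
\]
We want $\bE(\sqrt\mu-\sqrt X)\le \sqrt b$, because that gives $A-B=\sqrt b\,\bE(\sqrt\mu-\sqrt X)\le b$. From the identity,
\[
\bE\bigl(\sqrt\mu-\sqrt X\bigr)\le \frac{\bE|X-\mu|}{\sqrt\mu}\le \frac{\sigma}{\sqrt\mu}\le 1,
\]
using Cauchy--Schwarz and $\sigma^2\le\mu$. This yields $A-B\le\sqrt b\le b$, which is even better than claimed. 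If $\mu=0$ then $N=0$ and there is nothing to prove.

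It remains to check the hypergeometric variance. It equals $m\frac{N}{n}\frac{n-N}{n}\frac{n-m}{n-1}\le mN/n=\mu$. The only point needing care is the symmetry claim that the uniform measure on ordered equipartitions makes each block a uniform $m$-subset, which is immediate by permutation invariance. I expect no real obstacle. The main thing to watch is the degenerate cases $m=1$ or $b=1$: when $b=1$ we get $B=A$, and when $m=1$ the variance bound still holds. The slack between $\sqrt b$ and $b$ suggests that the stated bound is simply not tight.
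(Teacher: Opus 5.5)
Your proof is correct. It rests on the same reduction as the paper, $B=\sqrt b\,\bE\sqrt X$ with $X\sim\hg(n,n-k,m)$ and $\sqrt b\sqrt{\bE X}=A$, but you control the Jensen gap $\sqrt{\bE X}-\bE\sqrt X$ differently.

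The paper uses the second-order estimate of Lemma \ref{lem:jensen}, $\sqrt{\bE X}-\bE\sqrt X\le \frac12(\bE X)^{-3/2}\var(X)$. This gives $A-B\le \frac{b}{2\sqrt{n-k}}\cdot\frac{k(n-m)}{n(n-1)}$. That bound degenerates for small $n-k$, so the paper splits into cases and treats $n-k\le b^2$ with the trivial bound $A\le b$.

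You use the first-order bound $|\sqrt\mu-\sqrt X|\le |X-\mu|/\sqrt\mu$, together with Cauchy--Schwarz and $\var(X)\le\bE X$. This gives the uniform estimate $A-B\le\sqrt b$ with no case analysis, which already beats the stated $b$.

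The paper's route instead buys decay in $n-k$. Its error $\frac{b}{2\sqrt{n-k}}$ is smaller than your $\sqrt b$ as soon as $\bE X=(n-k)/b\ge 1/4$, and it is $o(1)$ once $n-k\gg b^2$. Combined with the trivial bound it gives $A-B\le\min\{\sqrt{n-k},\,b/(2\sqrt{n-k})\}\le\sqrt{b/2}$, so the paper's computation also shows the loss $b$ is not tight. The same second-order lemma is what yields the explicit error terms in Proposition \ref{prop:arbitrary-sizes} and Corollary \ref{cor:almost-equal-sizes}.

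Your symmetry claim is fine. For unordered equipartitions, pass to ordered ones: each unordered partition corresponds to exactly $b!$ ordered ones, and $\sum_\ell$ is invariant under reordering. The $n=1$ case of the variance formula is trivial because $X$ is then deterministic.
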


To prove this, we first rewrite $B$ in a simplified form. 
Recall that the hypergeometric distribution $\hg(n, k, m)$: given a set of $n$ objects having $n-k$ successes, we choose $m$ objects at random and 
$X \sim \hg(n, n-k, m)$ is the distribution of successes in our chosen set of $m$ elements. 
For $X\sim \hg(n,n-k, m)$ the probability of $X=s$ is
\begin{equation}
\label{Hg}
\bP[X=s] =\frac{{n-k\choose s}{k\choose m-s}}{{n \choose m}}\,.
\end{equation}

To compute $B$, note that the number of splittings is
$$
\frac{\binom{n}{m}\binom{n-m}{m}\cdots \binom{2m}{m}}{b!},
$$
and each group $G\subset [n]$ of cardinality $|G|=m$ appears in exactly
$$
\frac{\binom{n-m}{m}\binom{n-2m}{m}\cdots \binom{2m}{m}}{(b-1)!}
=\frac{b}{\binom{n}{m}}\cdot\frac{\binom{n}{m}\binom{n-m}{m}\cdots \binom{2m}{m}}{b!}
$$
splittings. Thus, by symmetry,
\begin{equation}\label{eq:symmetry}
B=\frac{1}{\sqrt{b}}\bE_{\G}\sum_{\ell=1}^b \sqrt{\sum_{j\in G_\ell} y_j}
=\frac{\sqrt{b}}{\binom{n}{m}}\sum_{G\subset [n]: |G|=m}\sqrt{\sum_{j\in G}y_j}.
\end{equation}
For each $G\subset [n]$ containing $s$ zeros and $m-s$ ones, we have 
$$
\sqrt{\sum_{j\in G}y_j}=\sqrt{m-s}
$$ 
and the number of such $G$ is 
$$
\binom{k}{s}\binom{n-k}{m-s}.
$$
Here, the range of $s$ is
$$
0\le s \le \min \{m, k\}.
$$
So \eqref{eq:symmetry} and \eqref{Hg}  give
\begin{equation*}
B=\frac{\sqrt{b}}{\binom{n}{m}}\sum_{s=0}^{\min \{m,k\}}\binom{k}{s}\binom{n-k}{m-s}\sqrt{m-s}
=\sqrt{b}\bE_{X\sim \hg(n,k,m)}[\sqrt{m-X}],
\end{equation*}
or equivalently, 
\begin{equation}
\label{eq:B}
B=\sqrt{b}\bE_{X\sim \hg(n,n-k,m)}[\sqrt{X}].
\end{equation}

This form is much easier to work with, and we need the following lemma. 

\begin{lemma}\label{lem:jensen}
    Let $X$ be a nonzero random variable taking values in $[0,\infty)$. Then 
    \begin{equation}
       \sqrt{\bE X}-  \frac{1}{2}(\bE X)^{-3/2}\var(X) \le \bE\sqrt{X}\le  \sqrt{\bE X}.
    \end{equation} 
    Consequently, for constants $a,b$ such that $aX+b$ is nonzero taking values in $[0,\infty)$, one has 
    \begin{equation}
       \sqrt{a\bE X+b}-\frac{a^2}{2}(a\bE X+b)^{-3/2}\var(X) \le \bE\sqrt{aX+b}\le\sqrt{a\bE X+b}.
    \end{equation}
\end{lemma}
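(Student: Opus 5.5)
The upper bound is concavity of the square root: Jensen's inequality gives $\bE\sqrt{X}\le\sqrt{\bE X}$ immediately. For the lower bound, set $\mu:=\bE X$. Since $X$ is nonnegative and not almost surely zero, $\mu>0$. I would look for a pointwise lower bound of $\sqrt{x}$ on $[0,\infty)$ by a quadratic in $x-\mu$, and then take expectations.

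The natural candidate is the second-order Taylor expansion at $\mu$ with the most negative curvature we can afford:
\[
\sqrt{x}\;\ge\;\sqrt{\mu}+\frac{x-\mu}{2\sqrt{\mu}}-\frac{(x-\mu)^2}{2\mu^{3/2}},\qquad x\ge 0.
\]
Taking expectations, the linear term vanishes and the quadratic term becomes $\frac{1}{2}\mu^{-3/2}\var(X)$, which is exactly the claimed bound.

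To verify the pointwise inequality, substitute $x=\mu t^2$ with $t\ge 0$ and divide by $\sqrt{\mu}$. The claim becomes
\[
t\ge 1+\tfrac{1}{2}(t^2-1)-\tfrac{1}{2}(t^2-1)^2 .
\]
Moving everything to one side, we need $h(t):=t-\tfrac12-\tfrac12 t^2+\tfrac12 (t^2-1)^2\ge 0$. Write $u=t-1$, so that $t^2-1=u(u+2)$. Then
\[
t-\tfrac12-\tfrac12 t^2=-\tfrac12 u^2,
\]
and hence
\[
h=\tfrac12 u^2\bigl((u+2)^2-1\bigr)=\tfrac12 u^2(u+1)(u+3).
\]
For $t\ge0$ we have $u\ge -1$, so every factor is nonnegative and $h\ge0$. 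This factorization is the only step needing care, and it is a one-line check.

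The ``consequently'' part follows by applying the first inequality to $Y:=aX+b$. This is a nonnegative random variable that is not almost surely zero, with $\bE Y=a\bE X+b$ and $\var(Y)=a^2\var(X)$.
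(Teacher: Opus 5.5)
Your proof is correct and takes the paper's route: the same pointwise bound $\sqrt{x}\ge\sqrt{\mu}+\frac{x-\mu}{2\sqrt{\mu}}-\frac{(x-\mu)^2}{2\mu^{3/2}}$ at $\mu=\bE X$, then expectations, with Jensen for the upper bound and rescaling for the second statement. The only difference is how the pointwise bound is checked. The paper uses the exact remainder identity $\sqrt{x}-\sqrt{x_0}-\frac{x-x_0}{2\sqrt{x_0}}=-\frac{(x-x_0)^2}{2\sqrt{x_0}(\sqrt{x}+\sqrt{x_0})^2}$ together with $(\sqrt{x}+\sqrt{x_0})^2\ge x_0$, whereas you substitute $x=\mu t^2$ and factor the resulting polynomial.
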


\begin{proof}
   The second statement follows from the first one by rescaling. To prove the first statement, note that the right-hand side estimate is simply the Jensen inequality for $\sqrt{x}$. For the left-hand side, we have for any $x_0>0$
   \begin{align*}
      \sqrt{x}-\sqrt{x_0}-\frac{x-x_0}{2\sqrt{x_0}}
       =&\frac{x-x_0}{\sqrt{x}+\sqrt{x_0}}-\frac{x-x_0}{2\sqrt{x_0}}\\
       =&\frac{(x-x_0)(\sqrt{x_0}-\sqrt{x})}{2\sqrt{x_0}(\sqrt{x}+\sqrt{x_0})}\\
       =&-\frac{(x-x_0)^2}{2\sqrt{x_0}(\sqrt{x}+\sqrt{x_0})^2} \\
       \ge& -\frac{1}{2}x_0^{-3/2}(x-x_0)^2.
   \end{align*}
   Taking the expectation $\bE_{x\sim X}$ on both sides with $x_0=\bE X$ finishes the proof.  
\end{proof}

Now we are ready to prove Proposition \ref{prop:AvsB}.

\begin{proof}[Proof of Proposition \ref{prop:AvsB}]
The left-hand side estimate $B\le A$ is trivial, as we remarked earlier. We focus on the right-hand side estimate 
\begin{equation}\label{eq:upper}
\sqrt{n-k}\le \sqrt{b}\bE[\sqrt{X}]+b
\end{equation}
recalling \eqref{eq:B}, where $X=\hg(n,n-k,m)$ be the hypergeometric distribution. This trivially holds when $n-k\le b^2$. 

Now let us assume  $b^2< n-k$. To prove \eqref{eq:upper} in this case, recall that 
\begin{equation}
    \bE X=\frac{m(n-k)}{n}=\frac{n-k}{b},\qquad \textnormal{and}\qquad \var(X)=\frac{mk(n-k)(n-m)}{n^2(n-1)}.
\end{equation}
According to Lemma \ref{lem:jensen}, we have
\begin{equation}
\sqrt{b}\sqrt{\bE X}- \sqrt{b} \bE\sqrt{X}
\le \frac{\sqrt{b}}{2}(\bE X)^{-3/2}\var(X)\,,
\end{equation}
which is nothing but 
\begin{equation}
    \sqrt{n-k}-\sqrt{b}\bE\sqrt{X}
    \le \frac{\sqrt{b}}{2}\left(\frac{n-k}{b}\right)^{-3/2}\frac{mk(n-k)(n-m)}{n^2(n-1)}\, .
\end{equation}
The right-hand side simplifies as (recalling $n=bm$)
\begin{equation}
 \!\!\!\!  \!\!\!\!  \frac{\sqrt{b}}{2}\left(\frac{n-k}{b}\right)^{-3/2}\frac{mk(n-k)(n-m)}{n^2(n-1)}
    =\frac{b}{2\sqrt{n-k}}\frac{k(n-m)}{n(n-1)}
    \le \frac{b}{2\sqrt{n-k}}.
\end{equation}
In case $b^2< n-k$, this is bounded by $1/2$, so that 
\begin{equation}
\sqrt{n-k}- \sqrt{b} \bE\sqrt{X}\le \frac{1}{2}\le b.
\end{equation}

Therefore, we always have \eqref{eq:upper}, and thus finish the proof. 
\end{proof}

\subsection{The general case: arbitrary partition}

In this subsection, we prove similar bounds when \([n]\) is split into \(b\) blocks of prescribed,
not necessarily equal, sizes. Let \(\{y_j\}_{j=1}^n\) be a sequence of \(k\) zeros and \(n-k\)
ones, and let
\[
m_1,\dots,m_b\ge 1,
\qquad
m_1+\cdots+m_b=n.
\]
We wish to compare
\begin{equation}\label{eq:AB-arbitrary-sizes}
A=\sqrt{\sum_{j=1}^n y_j}=\sqrt{n-k}
\quad\text{and}\quad
B=
B(m_1,\dots, m_b)=\frac{1}{\sqrt b}\,\bE_{\Pi}\sum_{\ell=1}^b \sqrt{\sum_{j\in \Pi_\ell} y_j},
\end{equation}
where \(\bE_{\Pi}\) is with respect to all ordered splittings
\(\Pi=(\Pi_1,\dots,\Pi_b)\) of \([n]\) such that \(|\Pi_\ell|=m_\ell\) for every
\(1\le \ell\le b\).

For each \(1\le \ell\le b\), let
\[
X_\ell\sim \mathrm{Hg}(n,n-k,m_\ell).
\]
For fixed \(\ell\), every set \(G\subseteq [n]\) of cardinality \(m_\ell\) appears equally
often as \(\Pi_\ell\). Hence
\[
\bE_{\Pi}\sqrt{\sum_{j\in \Pi_\ell} y_j}
=
\binom{n}{m_\ell}^{-1}
\sum_{G\subseteq [n]:\,|G|=m_\ell}
\sqrt{\sum_{j\in G} y_j}
=
\bE\sqrt{X_\ell}.
\]
Therefore
\begin{equation}\label{eq:B-arbitrary-sizes-hg}
B=B(m_1,\dots, m_b)=\frac{1}{\sqrt b}\sum_{\ell=1}^b \bE\sqrt{X_\ell}.
\end{equation}

\begin{prop}\label{prop:arbitrary-sizes}
Under the above notation, we have
\[
B\le A.
\]
Moreover, if \(k<n\), then
\begin{equation}\label{eq:AB-arbitrary-sizes-bound}
A-B
\le
\sqrt{n-k}\left(1-\frac{1}{\sqrt{bn}}\sum_{\ell=1}^b \sqrt{m_\ell}\right)
+
\frac{k}{2(n-1)\sqrt{bn(n-k)}}
\sum_{\ell=1}^b \frac{n-m_\ell}{\sqrt{m_\ell}}.
\end{equation}
\end{prop}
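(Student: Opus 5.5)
Starting from the representation \eqref{eq:B-arbitrary-sizes-hg}, $B=\frac{1}{\sqrt b}\sum_{\ell}\bE\sqrt{X_\ell}$, I will treat each block separately with Lemma~\ref{lem:jensen}, exactly as in the proof of Proposition~\ref{prop:AvsB}.

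For the lower bound $B\le A$, I will not use the averaging at all. For each fixed ordered splitting $\Pi$, the Cauchy--Schwarz inequality gives
\[
\frac{1}{\sqrt b}\sum_{\ell=1}^b \sqrt{\sum_{j\in\Pi_\ell}y_j}\le \sqrt{\sum_{\ell=1}^b\sum_{j\in\Pi_\ell}y_j}=\sqrt{n-k}=A .
\]
Averaging over $\Pi$ then yields $B\le A$. Alternatively, Jensen on each $X_\ell$ gives $B\le \frac{1}{\sqrt b}\sum_\ell\sqrt{m_\ell(n-k)/n}$, and Cauchy--Schwarz on $\sum_\ell\sqrt{m_\ell}\le\sqrt{bn}$ gives the same conclusion.

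For the upper estimate, assume $k<n$. Then $\bE X_\ell=m_\ell(n-k)/n>0$, so each $X_\ell$ is a nonzero nonnegative variable and the left inequality of Lemma~\ref{lem:jensen} applies:
\[
\bE\sqrt{X_\ell}\ge \sqrt{\bE X_\ell}-\tfrac12(\bE X_\ell)^{-3/2}\var(X_\ell).
\]
The hypergeometric variance is
\[
\var(X_\ell)=\frac{m_\ell k(n-k)(n-m_\ell)}{n^2(n-1)}.
\]
Plugging this in, the error term simplifies to
\[
\frac12\Big(\frac{m_\ell(n-k)}{n}\Big)^{-3/2}\frac{m_\ell k(n-k)(n-m_\ell)}{n^2(n-1)}
=\frac{k(n-m_\ell)}{2(n-1)\sqrt{n}\sqrt{m_\ell}\sqrt{n-k}} .
\]
Summing over $\ell$ and dividing by $\sqrt b$ gives
\[
B\ge \frac{\sqrt{n-k}}{\sqrt{bn}}\sum_{\ell}\sqrt{m_\ell}-\frac{k}{2(n-1)\sqrt{bn(n-k)}}\sum_\ell\frac{n-m_\ell}{\sqrt{m_\ell}}.
\]
Subtracting this from $A=\sqrt{n-k}$ and writing $\sqrt{n-k}-\frac{\sqrt{n-k}}{\sqrt{bn}}\sum_\ell\sqrt{m_\ell}$ as $\sqrt{n-k}\bigl(1-\frac{1}{\sqrt{bn}}\sum_\ell\sqrt{m_\ell}\bigr)$ gives exactly \eqref{eq:AB-arbitrary-sizes-bound}.

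I do not expect a real obstacle. The only points needing care are the bookkeeping in the algebraic simplification of the variance term and the hypothesis $k<n$, which guarantees $\bE X_\ell>0$ so that Lemma~\ref{lem:jensen} is applicable. The degenerate case $m_\ell=n$ (which forces $b=1$) is harmless, since then $\var(X_\ell)=0$.
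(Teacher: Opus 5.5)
Your proposal is correct and follows the paper's proof essentially step for step. You use Cauchy--Schwarz on each fixed splitting for $B\le A$, and for the upper estimate you apply Lemma~\ref{lem:jensen} to each $X_\ell\sim\mathrm{Hg}(n,n-k,m_\ell)$ with the hypergeometric variance, then sum over $\ell$ and subtract from $A=\sqrt{n-k}$; your simplification of the error term agrees with the paper's.
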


\begin{proof}
The lower bound \(B\le A\) is immediate, as before. Indeed, for every fixed splitting \(\Pi\),
\[
\frac{1}{\sqrt b}\sum_{\ell=1}^b \sqrt{\sum_{j\in \Pi_\ell} y_j}
\le
\sqrt{\sum_{\ell=1}^b \sum_{j\in \Pi_\ell} y_j}
=
\sqrt{\sum_{j=1}^n y_j}
=
A,
\]
by Cauchy--Schwarz inequality and averaging over \(\Pi\) proves $B\le A.$

Now assume \(k<n\). For each $1\le \ell\le b$, we have
\[
\bE X_\ell=\frac{m_\ell(n-k)}{n},
\qquad
\operatorname{Var}(X_\ell)=
\frac{m_\ell k (n-k)(n-m_\ell)}{n^2(n-1)}.
\]
Applying Lemma \ref{lem:jensen} to each \(X_\ell\), we obtain
\begin{align*}
\bE\sqrt{X_\ell}
\ge
\sqrt{\bE X_\ell}
-\frac{1}{2}(\bE X_\ell)^{-3/2}\operatorname{Var}(X_\ell)
=
\sqrt{\frac{m_\ell(n-k)}{n}}
-
\frac{k(n-m_\ell)}{2(n-1)\sqrt{n\,m_\ell\,(n-k)}}.
\end{align*}
Summing over \(\ell\) and using \eqref{eq:B-arbitrary-sizes-hg}, we get
\[
B
\ge
\sqrt{\frac{n-k}{bn}}\sum_{\ell=1}^b \sqrt{m_\ell}
-
\frac{k}{2(n-1)\sqrt{bn(n-k)}}
\sum_{\ell=1}^b \frac{n-m_\ell}{\sqrt{m_\ell}}.
\]
Subtracting from \(A=\sqrt{n-k}\) yields \eqref{eq:AB-arbitrary-sizes-bound}.
\end{proof}

\begin{corollary}\label{cor:almost-equal-sizes}
Suppose
\[
n=bm+r,
\qquad
0\le r<b,
\]
and the block sizes satisfy
\[
m_\ell\in\{m,m+1\}
\qquad (1\le \ell\le b),
\]
with exactly \(r\) of them equal to \(m+1\). Then
\begin{equation}\label{eq:B-almost-equal-sizes}
B=
\frac{b-r}{\sqrt b}\,\bE\sqrt{Y}
+
\frac{r}{\sqrt b}\,\bE\sqrt{Z},
\end{equation}
where
\[
Y\sim \mathrm{Hg}(n,n-k,m),
\qquad
Z\sim \mathrm{Hg}(n,n-k,m+1),
\]
and
\begin{align}
A-B
&\le
\sqrt{n-k}
\left(
1-\frac{(b-r)\sqrt m+r\sqrt{m+1}}{\sqrt{bn}}
\right)
\notag\\
&\qquad\qquad
+
\frac{k}{2(n-1)\sqrt{bn(n-k)}}
\left(
(b-r)\frac{n-m}{\sqrt m}
+
r\frac{n-m-1}{\sqrt{m+1}}
\right).
\label{eq:AB-almost-equal-sizes}
\end{align}
In particular,
\begin{equation}\label{eq:AB-almost-equal-sizes-coarse}
B\le A\le B+b.
\end{equation}
\end{corollary}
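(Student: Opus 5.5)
The first two claims follow directly from the general setup. Formula \eqref{eq:B-almost-equal-sizes} comes from \eqref{eq:B-arbitrary-sizes-hg}: exactly $b-r$ of the $X_\ell$ have law $\hg(n,n-k,m)$ and $r$ have law $\hg(n,n-k,m+1)$, so the sum $\frac1{\sqrt b}\sum_\ell\bE\sqrt{X_\ell}$ groups into the two stated terms. Likewise \eqref{eq:AB-almost-equal-sizes} is Proposition~\ref{prop:arbitrary-sizes} with $m_\ell\in\{m,m+1\}$ substituted into \eqref{eq:AB-arbitrary-sizes-bound}. That proposition needs $k<n$; if $k=n$, then $A=B=0$ and everything is trivial. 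The lower bound $B\le A$ is also part of Proposition~\ref{prop:arbitrary-sizes}.

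The real content is the coarse bound $A\le B+b$, which I would prove as in Proposition~\ref{prop:AvsB}. Since $B\ge 0$, the bound is trivial when $n-k\le b^2$, because then $A=\sqrt{n-k}\le b$. So assume $n-k>b^2$ and bound the two terms on the right of \eqref{eq:AB-almost-equal-sizes} separately. Since $m$ blocks need $m\ge1$, we have $n\ge b$.

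For the first term, I use $\sqrt{m+1}\ge\sqrt m$ and $n\le b(m+1)$. This gives
\[
\frac{(b-r)\sqrt m+r\sqrt{m+1}}{\sqrt{bn}}\ge\frac{b\sqrt m}{\sqrt{bn}}=\sqrt{\frac{bm}{n}}\ge\sqrt{\frac{m}{m+1}},
\]
hence $1-\sqrt{m/(m+1)}\le \frac{1}{2(m+1)}\cdot\frac{1}{\sqrt{m/(m+1)}}\le\frac1{m+1}$. A cleaner route is $1-\sqrt{bm/n}\le 1-bm/n=r/n<b/n$. The first term is then at most $\sqrt{n-k}\cdot b/n\le b/\sqrt n\le\sqrt b\le b$.

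This is where care is needed: the constants must add up to $b$, so I would combine this with the second term more carefully. For the second term, use $k/(n-1)\le n/(n-1)$ and $(n-m_\ell)/\sqrt{m_\ell}\le n/\sqrt m$. The second term is then at most
\[
\frac{k}{2(n-1)\sqrt{bn(n-k)}}\cdot\frac{bn}{\sqrt m}=\frac{k\sqrt{bn}}{2(n-1)\sqrt{m(n-k)}}.
\]
Now $n/m\le 2b$ since $m\ge1$ and $n<b(m+1)\le 2bm$. Together with $k\le n-1$ when $k<n$, this gives a bound of $\sqrt{2}\,b/(2\sqrt{n-k})\le \frac{1}{\sqrt2}$ when $n-k>b^2$. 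If $k=n-1$, I would adjust slightly using $k/(n-1)\le1$.

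Summing the two terms, the first is at most $b\sqrt{n-k}/n\le b/\sqrt{n}$ and the second is at most $1/\sqrt2$. If $n\ge 4$ (and $b\ge 1$), the total $b/2+1/\sqrt2$ is at most $b$ once $b\ge 2$. The case $b=1$ gives $B=A$ exactly, since the only block is $[n]$. Small $n$ is absorbed by the trivial case $n-k\le b^2$.

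The main obstacle is this constant bookkeeping, especially making the first (rounding) term at most about $b/2$. I expect $r/n\le b/n$ together with $\sqrt{n-k}\le\sqrt n$ to suffice. If it does not, I would fall back to the cruder approach of coupling with an equal partition of $bm$ elements and absorbing the $r<b$ extra ones into the slack.
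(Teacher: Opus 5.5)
Your proposal is correct, and its skeleton matches the paper's. The identity and the bound \eqref{eq:AB-almost-equal-sizes} are read off from \eqref{eq:B-arbitrary-sizes-hg} and Proposition~\ref{prop:arbitrary-sizes}. The case $n-k\le b^2$ is trivial, and otherwise the rounding term $T_1$ and the variance term $T_2$ are bounded separately.

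The difference is in the two estimates. The paper bounds $T_1$ by a second application of Lemma~\ref{lem:jensen}, to the two-point law on $\{m,m+1\}$ with weights $(b-r)/b$ and $r/b$. This gives $T_1\le r(b-r)\sqrt{n-k}/(2n^2)<1/8$. It then bounds $T_2\le 1/2$ via $\sqrt{bnm}\ge bm$ and $n-m<bm$. Hence $A-B<5/8$ whenever $n-k>b^2$, which is at most $b$ for every $b\ge 1$, with no further case split.

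You instead discard the $\sqrt{m+1}$ terms and use $1-\sqrt{x}\le 1-x$, giving $T_1\le r\sqrt{n-k}/n<b/\sqrt n$. You then bound $T_2<1/\sqrt2$ via $n/m\le 2b$. This is more elementary, since it avoids the second-order Jensen bound. The cost is that your total can exceed $1$, so you need $b\ge 2$. You close the remaining case correctly: $b=1$ forces $r=0$ and $m=n$, hence $B=A$.

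Two small simplifications are available:
\begin{itemize}
\item In the regime $n-k>b^2$ you already have $n>b^2$, so $b/\sqrt n<1$ directly. The total is then $<1+1/\sqrt2<2\le b$, and the $n\ge 4$ detour is not needed.
\item The remarks about adjusting for $k=n-1$ and about a fallback coupling are unnecessary. Your bound $k/(n-1)\le 1$ holds for all $k<n$, and the argument already closes as written.
\end{itemize}
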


\begin{proof}
The representation \eqref{eq:B-almost-equal-sizes} is immediate from
\eqref{eq:B-arbitrary-sizes-hg}, and \eqref{eq:AB-almost-equal-sizes} is just
\eqref{eq:AB-arbitrary-sizes-bound} specialized to the present choice of block sizes.

It remains to prove $A\le B+b$. Again, if \(n-k\le b^2\), then
\[
A=\sqrt{n-k}\le b\le B+b.
\]
Now assume \(n-k>b^2\). Then \(n>b^2\), and therefore \(m=\lfloor n/b\rfloor\ge b\).

It suffices to show the right-hand side of \eqref{eq:AB-almost-equal-sizes}, denoted by \(T_1+T_2\), is no larger than $b$, where
\[
T_1=
\sqrt{n-k}
\left(
1-\frac{(b-r)\sqrt m+r\sqrt{m+1}}{\sqrt{bn}}
\right)
\]
and
\[
T_2=
\frac{k}{2(n-1)\sqrt{bn(n-k)}}
\left(
(b-r)\frac{n-m}{\sqrt m}
+
r\frac{n-m-1}{\sqrt{m+1}}
\right).
\]

For \(T_1\), note that
\[
\frac{(b-r)\sqrt m+r\sqrt{m+1}}{b}
\]
is the average of \(\sqrt m\) and \(\sqrt{m+1}\) with weights \((b-r)/b\) and \(r/b\),
respectively. Applying Lemma \ref{lem:jensen} to the random variable $X$  such that $\Pr[X=m]=(b-r)/b$  and $\Pr[X=m+1]=r/b$  yields
\[
\sqrt{\frac{n}{b}}
-
\frac{(b-r)\sqrt m+r\sqrt{m+1}}{b}
\le
\frac{1}{2}\left(\frac{n}{b}\right)^{-3/2}\frac{r(b-r)}{b^2}.
\]
Multiplying by \(\sqrt{b(n-k)/n}\), we obtain
\[
T_1\le \frac{r(b-r)\sqrt{n-k}}{2n^2}.
\]
Since \(r(b-r)\le b^2/4\) and \(n\ge n-k>b^2\), it follows that
\[
T_1\le \frac{b^2\sqrt n}{8n^2}<\frac18.
\]

For \(T_2\), we have
\[
T_2
\le
\frac{k}{2(n-1)\sqrt{bn(n-k)}}\,b\frac{n-m}{\sqrt m}
=
\frac{bk(n-m)}{2(n-1)\sqrt{bnm(n-k)}}.
\]
Since \(n\ge bm\), we have \(\sqrt{bnm}\ge bm\), hence
\[
T_2
\le
\frac{k(n-m)}{2m(n-1)\sqrt{n-k}}=\frac{n-m}{m\sqrt{n-k}}\cdot \frac{k}{2(n-1)}.
\]
Recall that  $n-m=(b-1)m+r<bm, b<\sqrt{n-k}$ and $k\le n-1$, so
\[
T_2
\le 
\frac12.
\]

Therefore
\[
A-B\le T_1+T_2<\frac18+\frac12<1\le b,
\]
which proves \eqref{eq:AB-almost-equal-sizes-coarse}.
\end{proof}

\section{A weaker bound}
\label{proof ingredients}
\subsection{Random splitting}
With the key estimate Corollary \ref{cor:almost-equal-sizes}  in hand, the arc of our proof is similar to Kane's work \cite{DK} about the average sensitivity of $f=\sgn (p)$. For this, we begin by splitting the coordinates into $b$ blocks $G_1,\dots, G_b$, each of which has at most $n/b+1$ elements:
$$
[n]=\cup_{\ell =1}^{m}G_\ell,\qquad |G_\ell|\le \lfloor n/b\rfloor+1.
$$
We shall use the following notation. When $x\in \{-1,1\}^n$ is divided into two parts $x=(y,z)$, we write $f_y(z)$ for $f(x)=f(y,z)$. This way, any function $f$ in $x$ restricts to a function $f_y$ in $z$. In particular, for each Bernoulli random variable $A\in \{-1,1\}^n$ and block $G_\ell$, we let $A^{\ell}$ be the coordinates of $A$ that do not lie in $G_\ell$. Then $f_{A^\ell}$ defines a function on coordinates in $G_\ell$. 

Kane's argument starts with the elementary identity for average sensitivity
\begin{equation}
\as[f]=\sum_{\ell}\bE_{A^\ell} \as[f_{A^\ell}],
\end{equation}
which fails for $\bsa$. It is for this reason we use the substitute 
\begin{equation}
\bsa[f]\le \frac{1}{\sqrt{b}} \bE_{\G}\sum_{\ell=1}^{b}\bE_{A^{\ell}}\bsa[f_{A^\ell}]+b
\end{equation}
obtained by taking expectation of \eqref{eq:AB-almost-equal-sizes-coarse}. 

\subsection{The function $\alpha$}

The following function $\alpha$ plays a crucial role in Kane's proof. For a nonzero polynomial $p$ on $\{-1,1\}^n$ and a vector $v\in \{-1,1\}^n$, we define 
$$
D_v p(x):=\langle v, \nabla p(x) \rangle =\sum_{j=1}^{n}v_j D_j p(x).
$$
We then define $\alpha(p)$ as 
\begin{equation}
\alpha(p):=\bE\min \left(1, \frac{|D_B p(A)|^2}{|p(A)|^2} \right),
\end{equation}
where $A$ and $B$ are i.i.d. Bernoulli random variables.
The quantity $\alpha(p)$ will serve as a key parameter in the induction. In Kane's work \cite{DK}, he also needs its Gaussian variant and the invariance principle. Here, we omit the details and refer to Kane's original paper for discussion. 

\subsection{The regular case}

As before, let $n$ be the dimension of the discrete hypercube $d$ be the degree. 

\begin{defn}
For any $a>0$, we define $\maxbsa(d,n,a)$ as the maximum Boolean surface area of a PTF $f=\sgn (p)$, where $\deg(p)\le d$ and $\alpha(p)\le a$.
\end{defn}

We will also need a variant of $\maxbsa$ for regular polynomials. Recall that a polynomial $p$ is $\tau$-regular for some $\tau>0$ if $\Inf_i[p]\le \tau \var[p]$ for all $i\in [n]$.

\begin{defn}
For any $a,\tau>0$, we define $\maxbsar(d,n,a,\tau)$ as the maximum Boolean surface area of a PTF $f=\sgn (p)$, where $\deg(p)\le d$, $\alpha(p)\le a$ and $p$ is $\tau$-regular. 
\end{defn}

We shall use notations $\maxas$ and $\maxasr$ for average sensitivities in a similar manner. 

Similar to average sensitivity \cite{DK}, we have the following proposition.

\begin{prop}\label{prop1}
Let $a,\tau>0$ and $b\le n$ be a positive integer.  Then 
\begin{equation}
\maxbsar(d,n,a,\tau)\le \sqrt{b}\,\bE_{\aleph}\maxbsa(d,\lfloor n/b\rfloor+1,\aleph)+b
\end{equation}
for some nonnegative random variable $\aleph$ with $\bE \aleph=O(d^3 a b^{-1/2}+d^4\tau^{\frac{1}{8d}})$
\end{prop}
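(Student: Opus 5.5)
The plan is to follow Kane's proof of the analogous average-sensitivity statement \cite{DK} line by line. The only change is that Kane's exact identity $\as[f]=\sum_\ell \bE_{A^\ell}\as[f_{A^\ell}]$ is replaced by the averaged splitting inequality obtained from Corollary~\ref{cor:almost-equal-sizes}.

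\emph{Step 1: pass to blocks.} Fix $f=\sgn(p)$ with $\deg p\le d$, $\alpha(p)\le a$, and $p$ $\tau$-regular. Fix $x$ and apply \eqref{eq:AB-almost-equal-sizes-coarse} to the $0/1$ sequence $y_j=\mathbf 1[f(x)\neq f(x^{\oplus j})]$, averaging over ordered splittings $\Pi$ with block sizes in $\{\lfloor n/b\rfloor,\lfloor n/b\rfloor+1\}$. Note that $\sum_{j\in\Pi_\ell}y_j$ is exactly the sensitivity of $f_{A^\ell}$ at $x_{\Pi_\ell}$, where $A^\ell=x_{[n]\setminus\Pi_\ell}$. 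Taking $\bE_x$ then gives
\[
\bsa[f]\le \frac{1}{\sqrt b}\,\bE_\Pi\sum_{\ell=1}^b\bE_{A^\ell}\bsa[f_{A^\ell}]+b .
\]
Each $f_{A^\ell}=\sgn(p_{A^\ell})$ is a PTF of degree at most $d$ in at most $\lfloor n/b\rfloor+1$ variables. Adding dummy variables changes neither $\bsa$ nor $\alpha$, so $\maxbsa(d,\cdot,\cdot)$ is nondecreasing in $n$. It is also nondecreasing in $a$, since it is a maximum over a larger class. Hence
\[
\bsa[f_{A^\ell}]\le\maxbsa\bigl(d,\lfloor n/b\rfloor+1,\alpha(p_{A^\ell})\bigr).
\]
If $p_{A^\ell}\equiv0$, the restricted function is constant with $\bsa=0$, and we set $\alpha=0$.

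\emph{Step 2: define $\aleph$.} Let $\ell$ be uniform in $[b]$, independent of $\Pi$ and $A$, and set $\aleph:=\alpha(p_{A^\ell})$ with $(\Pi,\ell,A)$ random. Then $\frac1{\sqrt b}\sum_\ell=\sqrt b\,\bE_\ell$, and the display in Step~1 becomes exactly
\[
\bsa[f]\le\sqrt b\,\bE_\aleph\maxbsa\bigl(d,\lfloor n/b\rfloor+1,\aleph\bigr)+b .
\]
Maximizing over admissible $p$ gives the claimed inequality for $\maxbsar$. It remains to show $\bE\aleph=O(d^3ab^{-1/2}+d^4\tau^{1/(8d)})$.

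\emph{Step 3: bound $\bE\aleph$ (the main obstacle).} This is a statement purely about the polynomial $p$ and the random block restriction; it does not involve $\bsa$ at all. It is therefore the same quantity that Kane bounds in his regular-case lemma, and I would invoke that estimate.

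To reconstruct it, write $\alpha(p_{A^\ell})=\bE\min\bigl(1,|D_{B_\ell}p(A)|^2/|p(A)|^2\bigr)$, where $B_\ell$ is $B$ restricted to block $\ell$. Three ingredients then combine.

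First, $D_Bp=\sum_\ell D_{B_\ell}p$, and over a random splitting the block derivatives behave like a random $1/b$-fraction sampling of the coordinates.

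Second, by $\tau$-regularity and the invariance principle, one replaces $A$ and $B$ by Gaussians at a cost of $O(d^4\tau^{1/(8d)})$. In the Gaussian model the block pieces $D_{B_\ell}p(G)$ are, conditionally on $G$, centered Gaussians whose variances sum to that of $D_Bp(G)$.

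Third, one compares $\bE_\ell\min(1,\sigma_\ell^2/|p|^2)$ with $\min(1,\sigma^2/|p|^2)$, where $\sigma^2=\sum_\ell\sigma_\ell^2$. This requires anticoncentration of $p$ (Carbery--Wright), which is where the loss $b^{-1/2}$ (instead of $b^{-1}$) and the factor $d^3$ arise.

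The delicate part is the small-$|p|$ region where the truncation at $1$ is active. I would handle it exactly as Kane does, by splitting according to whether $|p(G)|$ is smaller than $b^{-1/2}\sigma$.
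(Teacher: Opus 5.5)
Your proposal is correct and takes the same route as the paper. Both replace Kane's additivity identity for $\as$ with the averaged block-splitting inequality from Corollary~\ref{cor:almost-equal-sizes}, applied pointwise in $x$. Both then define $\aleph=\alpha(p_{A^\ell})$ for a random block and take the bound on $\bE\aleph$ unchanged from \cite[Proposition 4.1]{DK}.

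One caveat concerns your heuristic for Step~3. The $d^3ab^{-1/2}$ term must scale with $a=\alpha(p)$, and Carbery--Wright anticoncentration, with its $\epsilon^{1/d}$ rate, cannot supply that on its own. Since you ultimately invoke Kane's lemma rather than this sketch, the argument is unaffected.
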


\begin{proof}
The proof is identical to that of \cite[Proposition 4.1]{DK}, except that one replaces
\begin{equation}
\as[f]=\sum_{\ell}\bE_{A^\ell} \as[f_{A^\ell}]
\end{equation}
with
\begin{equation}
\bsa[f]\le \frac{1}{\sqrt{b}} \bE_{\G}\sum_{\ell=1}^{b}\bE_{A^{\ell}}\bsa[f_{A^\ell}]+b.
\end{equation}
\end{proof}

\subsection{The general case: reduction to the regular polynomials}

Following \cite{DK}, we have the following reduction result. 

\begin{prop}\label{prop2}
Let $0<a,\tau,\epsilon<1/4$ and $b\le n$ be a positive integer.  Then 
\begin{align*}
\maxbsa(d,n,a)&\le \tau^{-1/2}\left(d\log(1/\tau)\log (1/\epsilon) \right)^{O(d)}+3\sqrt{n\epsilon}\\
&\qquad +\bE_{\aleph}[\maxbsar(d,n,\aleph,\tau)]
\end{align*}
for some nonnegative random variable $\aleph$ with $\bE[\aleph]\le a$.
\end{prop}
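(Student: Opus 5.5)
This proposition is the $\bsa$ analogue of Kane's reduction to the regular case, \cite[Proposition 4.2]{DK}, and I would follow his argument step by step.

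\textbf{Step 1: the decision tree.} Run the regularity lemma, i.e.\ the critical-index decision tree, on $p$. One repeatedly queries the variable of largest influence. Each path stops at one of three kinds of leaf: the restricted polynomial $p_\rho$ is $\tau$-regular; or $f_\rho$ is $\epsilon$-close to a constant; or the path is abnormally long, which happens with probability at most $\epsilon$. The tree has depth $D=\tau^{-1}(d\log(1/\tau)\log(1/\epsilon))^{O(d)}$. Kane also shows the $\alpha$-parameter is a supermartingale along the tree. So the leaf value $\aleph:=\alpha(p_\rho)$, with $\rho$ a uniformly random path, satisfies $\bE\aleph\le a$. I would take this as given from \cite{DK}.

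\textbf{Step 2: splitting $\bsa$ along the tree.} Write $x=(y,z)$, where $y$ are the queried variables on the path and $z$ the rest. Then $\sens{f}(x)\le s_{\text{tree}}(x)+\sens{f_\rho}(z)$, where $s_{\text{tree}}(x)$ counts sensitive coordinates among the at most $D$ queried ones. By subadditivity of $\sqrt{\cdot}$,
\[
\bsa[f]\le \bE\sqrt{s_{\text{tree}}}+\bE_\rho\,\bsa[f_\rho].
\]
\emph{Tree term.} Bounding it by $\sqrt D$ gives $\tau^{-1/2}(\cdots)^{O(d)}$, exactly the first term of the claimed bound. To use only $\sqrt D$ rather than an influence bound, I would apply Jensen along each path: $\bE\sqrt{s}\le\sqrt{\bE s}\le\sqrt{D}$, since $s_{\text{tree}}\le D$.
\emph{Leaf terms.} At regular leaves, $\bsa[f_\rho]\le\maxbsar(d,n,\alpha(p_\rho),\tau)$. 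The dimension of $f_\rho$ is at most $n$, and $\maxbsar$ is monotone in $n$ because dummy variables can be added.

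\textbf{Step 3: the bad leaves.} At the long-path leaves I would use the trivial bound $\bsa\le\sqrt n$, weighted by probability $\epsilon$. That contributes $\epsilon\sqrt n\le\sqrt{n\epsilon}$. The $\epsilon$-close-to-constant leaves need more care. Since $\bsa[g]=\bE\sqrt{s_g}$ and $s_g\le n$, Cauchy--Schwarz gives
\[
\bE\sqrt{s_g}\le\sqrt{\bP[s_g\ge1]}\,\sqrt{\bE s_g}.
\]
Here $\bE s_g=\Inf[g]\le 2\cdot\bP[g\ne a]\cdot n\le 2\epsilon n$ for a function $\epsilon$-close to a constant $a$, because each minority point contributes at most $n$ edges. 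This gives $\bsa[g]\le\sqrt{2\epsilon n}$. Collecting constants yields the term $3\sqrt{n\epsilon}$.

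\textbf{Expected difficulty.} The main obstacle is Step 1: checking that the $\alpha$-supermartingale property and the leaf classification from \cite{DK} carry over verbatim. These are statements about the polynomial $p$, not about average sensitivity, so they should transfer unchanged. Only Steps 2 and 3 replace Kane's additive identity for $\as$ with the subadditivity of the square root.
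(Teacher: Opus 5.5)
Your proposal is correct and follows essentially the same route as the paper. It uses Kane's decision-tree decomposition of depth $D$, the pointwise bound $\sqrt{\sens{f}(x)}\le\sqrt{D}+\sqrt{\sens{f_\rho}(x_{\mathrm{free}})}$, and the same three-way split of leaves: exceptional ones contribute $\epsilon\sqrt{n}$, $\epsilon$-close-to-constant ones contribute $O(\sqrt{n\epsilon})$, and $\tau$-regular ones are controlled by $\bE_\aleph[\maxbsar(d,n,\aleph,\tau)]$ with $\bE\aleph\le a$ taken from Kane. Your Cauchy--Schwarz step for the close-to-constant leaves is unnecessary, since Jensen alone gives $\bsa[g]\le\sqrt{\Inf[g]}\le\sqrt{2\epsilon n}$, which is all you actually use.
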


\begin{proof}
The proof is similar to that of \cite[Proposition 4.4]{DK}, which relies on \cite[Proposition 2.11]{DK} about decision-tree decomposition: Any polynomial $p$ on $\{-1,1\}^n$ of degree $d$ can be written as a decision tree of depth at most 
$$
D=\tau^{-1}\left(d\log(1/\tau)\log (1/\epsilon) \right)^{O(d)}
$$
with variables at the internal nodes such that for a random leaf $\rho$, with probability $1-\epsilon$, the polynomial $p_\rho$ is either $\tau$-regular, or constant sign with probability at least $1-\epsilon$. Here, $p_\rho$ is the function corresponding to the leaf $\rho$.

In the case of average sensitivity, Kane proved
\begin{equation}
\maxas(d,n,a)\le D+3n\epsilon+\bE_{\aleph}[\maxasr(d,n,\aleph,\tau)]
\end{equation}
via the pointwise estimate 
\begin{equation}\label{decisiontree}
\sens{f}(x)\le D+s_{f_\rho}(x_{\textnormal{free}}).
\end{equation}
Here, $x_{\textnormal{free}}$ denotes the coordinates that are not fixed by the leaf $\rho$. Unlike average sensitivity that is linear in $\sens{f}$, the Boolean surface area $\bsa[f]$ is the expectation of the square root of $\sens{f}$. But we still have 
\begin{equation}
\sqrt{\sens{f}(x)}\le \sqrt{D}+\sqrt{s_{f_\rho}(x_{\textnormal{free}})}.
\end{equation}
from \eqref{decisiontree}. Taking the expectation gives
\begin{equation}
\bsa[f]\le \sqrt{D}+\bE_{\textnormal{leaves } \rho}\bsa[f_{\rho}]
\end{equation}
Now we further estimate $\bE_{\textnormal{leaves } \rho}\bsa[f_{\rho}]$ as was done for $\bE_{\textnormal{leaves } \rho}\as[f_{\rho}]$ in \cite{DK}. Recall that with probability $1-\epsilon$, $p_\rho$ is either $\tau$-regular or constant sign with probability $1-\epsilon$, thus dividing the leaves into three parts: (1) the exceptional set of probability at most $\epsilon$, (2) the leaves for which $p_\rho$ has constant sign with probability $1-\epsilon$, and (3) the leaves that are $\tau$-regular. 

  The contribution from part (1) is at most $\sqrt{n}\epsilon$ (compared with $n\epsilon$ for average sensitivity). The contribution from part (2)  is at most $2\sqrt{n\epsilon}$ (compared with $2n\epsilon$ for average sensitivity). The contribution from part (3) is controlled by 
$$\bE_{\aleph}[\maxbsar(d,n,\aleph,\tau)]$$
with $\bE[\alpha(p_\rho)]\le  \bE[\alpha(p)] \le a$. All combined, we finish the proof. 
\end{proof}

\subsection{Putting everything together}
\label{putting together}

We start with a variant of Lemma~4.5 of \cite{DK}.

\begin{lemma}\label{lem:small alpha}
Let $f=\sgn(p)$ with $\deg(p)\le d$ and
\[
\alpha(p)\le (K\log n)^{-2d},
\qquad K\gg 1.
\]
Then
\[
\bsa[f]\le \alpha(p).
\]
\end{lemma}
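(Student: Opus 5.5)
The plan is to show that the hypothesis forces $f$ to be constant. Then $\sens{f}\equiv 0$, so $\bsa[f]=0\le\alpha(p)$. The conclusion ``$\bsa[f]\le\alpha(p)$'' should then come for free, exactly as in the average-sensitivity version (Lemma~4.5 of \cite{DK}), where the analogous bound is also a disguised rigidity statement. I would therefore argue by contradiction: assume $f$ is non-constant and derive $\alpha(p)>(K\log n)^{-2d}$ for a suitable absolute $K$.

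The first step is a pointwise link between sensitivity and the integrand defining $\alpha$. Fix $x$ with $\sens{f}(x)\ge 1$ and choose $i$ with $\sgn p(x)\neq\sgn p(x^{\oplus i})$. Then $|D_ip(x)|=\tfrac12|p(x)-p(x^{\oplus i})|\ge \tfrac12|p(x)|$; with the convention that opposite signs straddle $0$, one even has $|D_ip(x)|\ge|p(x)|/2$ and in fact $2|D_ip(x)|\ge|p(x)|+|p(x^{\oplus i})|$. Now write $D_Bp(x)=B_iD_ip(x)+S$, where $S$ is independent of $B_i$. Then $\max_{\pm}|S\pm D_ip(x)|\ge|D_ip(x)|$, so with probability at least $1/2$ over $B$ we have $|D_Bp(x)|\ge |D_ip(x)|\ge |p(x)|/2$. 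Hence $\min(1,|D_Bp|^2/|p|^2)\ge 1/4$ with probability $\ge 1/2$, and averaging over $A$ gives
\[
\alpha(p)\ \ge\ \tfrac18\,\bP[\sens{f}(A)\ge 1].
\]
If needed, I would also use the two-sided version, which is harmless: running the same bound at $x^{\oplus i}$ doubles the mass.

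The second step, and the main obstacle, is a lower bound $\bP[\sens{f}\ge1]\ge 8(K\log n)^{-2d}$ for every non-constant degree-$d$ PTF. This cannot hold verbatim for the raw boundary measure. My first attempt would instead follow Kane's proof of \cite[Lemma~4.5]{DK} and work with $q(A,B):=D_Bp(A)^2-\tfrac14 p(A)^2$ directly, which is a polynomial of degree at most $2d$ in $(A,B)$. The event that the integrand is large is the event $\{q>0\}$, so it suffices to show $\bP[q>0]\gtrsim (K\log n)^{-2d}$ whenever $q$ is positive somewhere. For this I would use the decision-tree/regularity decomposition (\cite[Proposition~2.11]{DK}, as used in Proposition~\ref{prop2}) with $\tau,\epsilon$ polylogarithmic in $n$:
\begin{enumerate}[(i)]
\item On a leaf where $p_\rho$ is $\tau$-regular, the invariance principle and Carbery--Wright anticoncentration show that $q>0$ has probability bounded below by a power of $\tau$. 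This is where $(\log n)^{-2d}$ enters, via hypercontractivity for degree $2d$.
\item On leaves where $p_\rho$ has almost-constant sign, I would use the first step locally.
\end{enumerate}

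Finally, combining the two steps gives $\alpha(p)\ge(K\log n)^{-2d}$, and this is strict after adjusting $K$. That contradicts the hypothesis, so $f$ is constant and $\bsa[f]=0\le\alpha(p)$. If the rigidity turns out to fail at the stated threshold, my fallback is the direct estimate
\[
\bsa[f]\ \le\ \sqrt n\,\bP[\sens{f}\ge1]\ \le\ 8\sqrt n\,\alpha(p),
\]
combined with Proposition~\ref{prop:tail} to replace $\sqrt n$ by a polylogarithmic factor. However, the clean form of the statement strongly suggests the rigidity route is what is intended.
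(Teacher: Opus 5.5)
\textbf{The rigidity claim is false.} Small $\alpha(p)$ does not force $f$ to be constant. Take $p(x)=x_1+\cdots+x_n-n+\tfrac12$, which has degree $1\le d$. Then $f=\sgn(p)$ equals $1$ only at $(1,\dots,1)$, so $f$ is nonconstant, with $\bsa[f]=(n+\sqrt n)2^{-n}>0$. But $D_jp(x)=x_j$, so $D_Bp(A)=\sum_j B_jA_j$ has second moment $n$. Meanwhile $|p(A)|\ge n/3$ outside the event $\{\sum_i A_i>n/2\}$, which has probability at most $e^{-n/8}$. Hence $\alpha(p)\le 9/n+e^{-n/8}$, which is below $(K\log n)^{-2d}$ once $n$ is large.

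\textbf{The example also breaks every version of your second step.}
\begin{enumerate}[(a)]
\item $\{q>0\}$ is contained in the event where the integrand of $\alpha(p)$ is at least $1/4$. So $\bP[q>0]\le 4\alpha(p)=O(1/n)$, even though $q>0$ somewhere. No lower bound of order $(K\log n)^{-2d}$ can hold.
\item This $p$ is $\tfrac1n$-regular ($\Inf_i[p]=1$, $\var[p]=n$), so item (i) fails as well. Regularity and Carbery--Wright only bound small-ball probabilities of $p$ relative to its standard deviation. Here $0$ lies about $\sqrt n$ standard deviations from $\E p$, and nothing makes $\{q>0\}$ large.
\end{enumerate}

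\textbf{The fallback loses a factor $\sqrt n$.} Your first step is correct, and so is $\bsa[f]\le\sqrt n\,\bP[\sens{f}\ge 1]\le 8\sqrt n\,\alpha(p)$. Proposition~\ref{prop:tail} cannot close the gap, because its tail bound $8(\log(en))^{2K_d}m^{-1/2}$ is not proportional to $\alpha(p)$. Any argument that only compares $\alpha(p)$ linearly with the boundary measure loses a polynomial factor in $n$.

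\textbf{What is missing is the mechanism the paper takes from Kane's Lemma~4.5 \cite{DK}.} Bound $\bsa[f]$ by a polynomial-in-$n$ factor times the probability $\delta$ that $f$ takes its less common value. Then show that $\delta$ is super-polynomially small below the threshold. The prefactor is harmless: since $\sqrt s\le s$ on nonnegative integers, $\bsa[f]\le\as[f]\le 2n\delta$. (In the example above $\bsa[f]=(n+\sqrt n)\delta$, so the prefactor is really of order $n$, not $O(\sqrt n)$ as the paper says, but this changes nothing.)

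For $\delta$, proceed as follows.
\begin{enumerate}[(1)]
\item Combine $\E|\nabla p|^2\ge\var[p]$ with Paley--Zygmund plus hypercontractivity for $|\nabla p(A)|^2$, Markov for $p(A)^2$, and Khintchine for the linear form $B\mapsto D_Bp(A)$. This gives $\alpha(p)\ge c^d\min\bigl(1,\var[p]/\E[p^2]\bigr)$.
\item Hence small $\alpha(p)$ forces $\var[p]\le C^d\alpha(p)(\E p)^2$.
\item The degree-$d$ hypercontractive tail bound then gives $\delta\le\bP\bigl[|p-\E p|\ge|\E p|\bigr]\le\exp\bigl(-c'\alpha(p)^{-1/d}\bigr)$.
\item Finally, $2n\exp\bigl(-c'\alpha(p)^{-1/d}\bigr)\le\alpha(p)$ as soon as $\alpha(p)\le(K\log n)^{-d}$ with $K=K(d)$ large.
\end{enumerate}
This is where the $\log n$ threshold comes from, and why the conclusion reads $\bsa[f]\le\alpha(p)$ rather than $\bsa[f]=0$.

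If Kane's Lemma~4.5 gives $\as[f]\le\alpha(p)$ under $\alpha(p)\le(K\log n)^{-d}$, as the paper indicates, the statement follows at once. Indeed $(K\log n)^{-2d}\le(K\log n)^{-d}$ and $\bsa[f]\le\as[f]$.
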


\begin{proof}
The proof is essentially the same as that of \cite[Lemma 4.5]{DK}.
In fact, $\as[f]$ is at most $O(n)$ times the probability that $f$ takes on its less common value, while for $\maxbsa[f]$, $O(n)$ is replaced by $O(\sqrt{n})$, yielding the bound $(K\log n)^{-2d}$ instead of $(K\log n)^{-d}$ in \cite[Lemma 4.5]{DK}.
\end{proof}

The preceding ingredients yield an alternative proof of a weaker bound.

\begin{theorem}\label{thm:weak-est}
Let $f=\sgn(p)$ be a degree-$d$ polynomial threshold function on $\{-1,1\}^n$.
Then
\[
\bsa[f]\le e^{C(d)\sqrt{\log n}}.
\]
\end{theorem}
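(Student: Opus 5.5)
Theorem~\ref{thm:weak-est} follows from Theorem~\ref{thm:main-est}, since $(\log(en))^{2K_d+1}\le e^{C(d)\sqrt{\log n}}$. The point of this section, though, is to derive it from the recursive machinery alone, so the plan is to run Kane's induction with Propositions~\ref{prop1} and \ref{prop2}, using Lemma~\ref{lem:small alpha} as the base case.

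\emph{Combining the two recursions.} Composing Proposition~\ref{prop2} with Proposition~\ref{prop1} gives a one-step recursion
\[
\maxbsa(d,n,a)\le \sqrt{b}\,\bE_{\aleph}\maxbsa\bigl(d,\lfloor n/b\rfloor+1,\aleph\bigr)+b+\tau^{-1/2}\bigl(d\log(1/\tau)\log(1/\epsilon)\bigr)^{O(d)}+3\sqrt{n\epsilon},
\]
where $\bE\aleph=O(d^3ab^{-1/2}+d^4\tau^{1/(8d)})$. The expectations nest, so by the tower property the average over $\aleph$ can be taken at the end. I will choose $\epsilon=n^{-2}$, which makes $3\sqrt{n\epsilon}$ negligible. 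I will choose $\tau$ so small that the regularity term $d^4\tau^{1/(8d)}$ is dominated by the target value of $\alpha$ at the bottom of the recursion; concretely $\tau=(K\log n)^{-C d^2}$, which costs only a polylogarithmic additive term $\tau^{-1/2}(\cdots)^{O(d)}$ per step. Starting from $a\le 1$, after $t$ steps the dimension is about $n/b^t$ and the expected $\alpha$-parameter is at most $(O(d^3)b^{-1/2})^t+O(d^4\tau^{1/(8d)})\cdot t$. The prefactor accumulated on the top-level term is $b^{t/2}$, and the additive terms accumulate to at most $b^{t/2}\cdot(b+\mathrm{polylog}\,n)$ after multiplying through.

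\emph{Stopping rule.} I will stop once $\bE\aleph\le (K\log n)^{-2d}$. At that point, Lemma~\ref{lem:small alpha} together with Markov's inequality handles the leaves. Where $\aleph$ is small, $\bsa$ is at most $\aleph$. Where $\aleph$ is large, I use the trivial bound $\bsa\le\sqrt{n'}$, weighted by $\bP[\aleph\text{ large}]$. Reaching the threshold requires $(O(d^3)b^{-1/2})^t\le(K\log n)^{-2d}$, i.e. $t\log b\gtrsim d\log\log n$ once $b\ge C d^6$. The total cost is therefore roughly $b^{t/2}\cdot b\cdot\mathrm{polylog}(n)$.

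\emph{Main obstacle.} The hardest step is bounding the Markov leftover, where I expect the $\sqrt{\log n}$ exponent to appear. There, the trivial bound $\sqrt{n/b^t}$ times the failure probability must be small. This forces either $b^t$ comparable to $n$ or a failure probability of size $n^{-1/2}$. Requiring $\bE\aleph\le n^{-1/2}$ would take $t\approx \log n/\log b$ steps, and then $b^{t/2}\approx\sqrt n$ loses everything. The balanced choice I will try first is to take $b=e^{\sqrt{\log n}}$ and $t\approx\sqrt{\log n}$. With this choice the contraction $b^{-t/2}$ matches what is needed, and each of the $t$ levels contributes a factor $e^{O_d(1)}$ from the constants $d^3,d^4$. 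Altogether the bound becomes $e^{C(d)\sqrt{\log n}}$. The bookkeeping I must verify is that the additive $b$ and $\tau$-terms, after multiplication by the $\sqrt b$ prefactors through the levels, stay within $e^{O_d(\sqrt{\log n})}$; I expect they do since their total is a geometric sum dominated by its last term. If the balance fails, I will instead run the induction on $n$ directly, assuming $\maxbsa(d,n',a)\le e^{C\sqrt{\log n'}}$ for $n'<n$ and using concavity of $\sqrt{\log n}$ to close the step, as in Kane's argument for $\as$.
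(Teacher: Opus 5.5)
Your opening sentence is by itself a complete proof. For $n\ge 2$ put $x=\sqrt{\log n}\ge\sqrt{\log 2}$. Then $\log(en)=1+x^2$ and $\log(1+x^2)\le x$, so $32(\log(en))^{2K_d+1}\le e^{(2K_d+6)x}$. For $n=1$ one has trivially $\bsa[f]\le 1$. That route runs through the restriction lemma (Lemma~\ref{lem:restriction}). The paper proves Theorem~\ref{thm:weak-est} precisely to have an argument that avoids it, using only Propositions~\ref{prop1}, \ref{prop2} and Lemma~\ref{lem:small alpha}. The recursive derivation you actually sketch, however, has a genuine gap and does not close.

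\textbf{Where the recursion fails.} You invoke Lemma~\ref{lem:small alpha} only at the bottom, so nothing damps the per-level costs. Unrolled, the recursion reads
\[
\maxbsa(d,n,a)\le\sum_{s<t}b^{s/2}I_s+b^{t/2}\,\bE\,\maxbsa(d,n_t,\aleph_t),
\]
where $I_s\ge b$ collects the additive terms of level $s$.
\begin{itemize}
\item \emph{Additive terms.} For $b=e^{\sqrt{\log n}}$ and $t\approx\sqrt{\log n}$ (so $b^t\approx n$), the last summand is about $\sqrt{n/b}\cdot b=\sqrt{nb}$, which is worse than the trivial bound. Your own estimate $b^{t/2}(b+\mathrm{polylog}\,n)$ is of order $\sqrt n\,b$. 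So the ``geometric sum dominated by its last term'' is the obstruction, not a reassurance. By contrast, the factor $b^{t/2}$ on the leaf term would be harmless, since $\bE\aleph_t\lesssim(C_d/\sqrt b)^t$ cancels it. Once $n_t=O(1)$, the Markov leftover you worry about is also harmless.
\item \emph{Regularity error.} In Proposition~\ref{prop1} the term $\tau^{1/(8d)}$ does not carry the factor $b^{-1/2}$. After the $\sqrt b$ prefactor it becomes $\sqrt b\,\tau^{1/(8d)}=e^{\frac12\sqrt{\log n}}(K\log n)^{-Cd/8}$ for your polylogarithmic $\tau$, which is unbounded. In the end it contributes about $b^{t/2}\tau^{1/(8d)}\approx\sqrt n/\mathrm{polylog}(n)$.
\item \emph{Fallback induction.} With the $a$-independent hypothesis $\maxbsa(d,n',a)\le e^{C\sqrt{\log n'}}$, the contraction of $\bE\aleph$ is invisible. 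Concavity only gains a factor $e^{C(\sqrt{\log n}-\sqrt{\log(n/b)})}\le b^{C/\sqrt{\log n}}$, which is smaller than the $\sqrt b$ prefactor once $\log n>4C^2$.
\end{itemize}

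\textbf{The missing idea.} The paper uses an induction hypothesis linear in $a$ and applies Lemma~\ref{lem:small alpha} at every level. One proves $\maxbsa(d,n,a)\le a\,\Phi(n)$ with $\Phi(n)=e^{M\sqrt{\log n}}$, by induction on $n$.
\begin{itemize}
\item If $a\le A(n)=(K\log n)^{-2d}$, the lemma gives the bound $a$ directly. Otherwise all costs are measured against $a>A(n)$.
\item Choose $\tau(n)=(K\log n)^{-16d^2}e^{-4d\sqrt{\log n}}$, so that $\tau(n)^{1/(8d)}=A(n)\,b(n)^{-1/2}$. This gives $\sqrt b\,\bE\aleph\le C_d(a+A(n))\le 2C_d a$. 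The $\sqrt b$ prefactor is thus cancelled by the contraction of $\alpha$, and the leftover $2C_d$ is paid by $\Phi(n)/\Phi(n/b)\ge e^{M/2}$.
\item The additive cost satisfies $P(n)+3+b(n)\lesssim_d(\log n)^{C_d}e^{2d\sqrt{\log n}}$. This is no longer polylogarithmic, but it is at most $\tfrac12A(n)\Phi(n)\le\tfrac12 a\Phi(n)$ for $M\gg_d 1$.
\item The base range $n<e^{M^2}$ is handled by $\sqrt n\le A(n)\Phi(n)$.
\end{itemize}
In your unrolled language: the level-$s$ cost is paid only on the event that $\alpha$ is still above the threshold. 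By Markov that event has probability at most $\bE\aleph_s/A$, and this is exactly what cancels the accumulated $\sqrt b$ prefactors.
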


\begin{proof}
We follow the argument of Section~4 of \cite{DK}, using Proposition~\ref{prop1}, Proposition~\ref{prop2}, and Lemma~\ref{lem:small alpha}.
Set
\[
F(n,a):=\maxbsa(d,n,a),
\qquad
A(n):=(K\log n)^{-2d},
\qquad
b(n):=e^{\sqrt{\log n}},
\]
and define
\[
\tau(n):=\frac{1}{(K\log n)^{16d^2} b(n)^{4d}}
= (K\log n)^{-16d^2} e^{-4d\sqrt{\log n}},
\]
as well as
\[
P(n):=\tau(n)^{-1/2}\Bigl(d\log \frac1{\tau(n)} \cdot \log n\Bigr)^{O(d)}.
\]
Applying Proposition~\ref{prop2} with $\epsilon=1/n$, and then Proposition~\ref{prop1}, we obtain
\[
F(n,a)
\le
P(n)+3+b(n)
+\sqrt{b(n)}\,\E_{\aleph}\!\bigl[F(\lfloor n/b(n)\rfloor+1,\aleph)\bigr],
\]
where
\[
\E \aleph \le C_d\!\left(a\,b(n)^{-1/2}+\tau(n)^{1/(8d)}\right).
\]
Since
\[
\tau(n)^{1/(8d)}
=
(K\log n)^{-2d} b(n)^{-1/2}
=
A(n)\,b(n)^{-1/2},
\]
we have
\[
\sqrt{b(n)}\,\E \aleph
\le C_d\bigl(a+A(n)\bigr).
\]

We claim that for some $M=M_d\gg 1$,
\[
F(n,a)\le a\,\Phi(n),
\qquad
\Phi(n):=e^{M\sqrt{\log n}},
\]
for all $0<a\le 1$.
The proof is by induction on $n$.
The initial step is verified in Appendix~\ref{app:weak-calcs}.
For the induction step, assume the claim holds in smaller dimension.
Then for every realization $u$ of $\aleph$,
\[
F(\lfloor n/b(n)\rfloor+1,u)\le u\,\Phi(\lfloor n/b(n)\rfloor+1),
\]
so
\[
\E_{\aleph}\!\bigl[F(\lfloor n/b(n)\rfloor+1,\aleph)\bigr]
\le
\E \aleph \cdot \Phi(\lfloor n/b(n)\rfloor+1).
\]
Substituting into the recurrence gives
\[
F(n,a)
\le
P(n)+3+b(n)
+
C_d\bigl(a+A(n)\bigr)\Phi(\lfloor n/b(n)\rfloor+1).
\]
If $a\le A(n)$, then Lemma~\ref{lem:small alpha} yields $F(n,a)\le a\le a\Phi(n)$.
Hence we may assume $a>A(n)$, in which case
\[
F(n,a)
\le
\underbrace{P(n)+3+b(n)}_{=:\textnormal{I}}
+
\underbrace{C_d\,a\,\Phi(\lfloor n/b(n)\rfloor+1)}_{=:\textnormal{II}}.
\]
Appendix~\ref{app:weak-calcs} proves that, once $M$ is chosen sufficiently large depending only on $d$,
\[
\textnormal{II}\le \frac12 a\,\Phi(n),
\qquad
\textnormal{I}\le \frac12 a\,\Phi(n).
\]
Therefore $F(n,a)\le a\Phi(n)$, proving the claim.
Choosing $a=1$ yields the theorem.
\end{proof}

\appendix

\section{Technical estimates for the weaker bound}
\label{app:weak-calcs}

We record the deferred verifications from the proof of Theorem~\ref{thm:weak-est}.
Recall the notation
\[
F(n,a):=\maxbsa(d,n,a),
\qquad
A(n):=(K\log n)^{-2d},
\qquad
b(n):=e^{\sqrt{\log n}},
\]
\[
\tau(n):=\frac{1}{(K\log n)^{16d^2} b(n)^{4d}},
\qquad
P(n):=\tau(n)^{-1/2}\Bigl(d\log \frac1{\tau(n)} \cdot \log n\Bigr)^{O(d)},
\]
and
\[
\Phi(n):=e^{M\sqrt{\log n}}.
\]

\subsection{Initial step of the induction}

We first verify that
\[
F(n,a)\le a\,\Phi(n)
\]
for all
\[
2\le n<e^{M_d^2}
\qquad\text{and}\qquad
a\in[0,1],
\]
provided $M_d$ is chosen sufficiently large depending only on $d$.
Indeed, fix $2\le n<e^{M_d^2}$ and $a\in[0,1]$.
If $a\le A(n)$, then Lemma~\ref{lem:small alpha} gives
\[
F(n,a)\le a\le a\,\Phi(n).
\]
Assume now that $a>A(n)$.
By the trivial bound $\bsa[f]\le \sqrt n$ we have
\[
F(n,a)\le \sqrt n.
\]
Thus it suffices to show that
\[
\sqrt n\le A(n)\Phi(n).
\]
Write
\[
x:=\sqrt{\log n}.
\]
Since $2\le n<e^{M_d^2}$, we have
\[
\sqrt{\log 2}\le x<M_d.
\]
Hence
\begin{align*}
\log\!\Bigl(\frac{A(n)\Phi(n)}{\sqrt n}\Bigr)
&= M_d x-\frac{x^2}{2}-2d\log(Kx^2)\\
&\ge \frac{M_d x}{2}-2d\log(KM_d^2)\\
&\ge \frac{M_d\sqrt{\log 2}}{2}-2d\log(KM_d^2).
\end{align*}
Choosing $M_d$ sufficiently large, depending only on $d$, so that
\[
\frac{M_d\sqrt{\log 2}}{2}\ge 2d\log(KM_d^2),
\]
we obtain
\[
A(n)\Phi(n)\ge \sqrt n.
\]
Therefore
\[
F(n,a)\le \sqrt n\le A(n)\Phi(n)\le a\,\Phi(n),
\]
which proves the initial step.

\subsection{Estimate for $\textnormal{II}$}

We need to prove
\begin{equation}
C_d\,\Phi(\lfloor n/b(n) \rfloor+1)\le \frac12 \Phi(n).
\end{equation}
By considering a different constant $C_d$, it is enough to show
\begin{equation}\label{eq:II-app}
C_d\,\Phi(n/b(n))\le \frac12 \Phi(n).
\end{equation}
Write
\[
x:=\log n,
\qquad
b(n)=e^{\sqrt{x}},
\]
so that
\[
\log \frac{n}{b(n)} = x-\sqrt{x}.
\]
Then \eqref{eq:II-app} becomes
\[
C_d\,e^{M\sqrt{x-\sqrt{x}}}\le \frac12 e^{M\sqrt{x}},
\]
or equivalently
\[
2C_d \le e^{M(\sqrt{x}-\sqrt{x-\sqrt{x}})}.
\]
Now
\[
\sqrt{x}-\sqrt{x-\sqrt{x}}
=
\frac{\sqrt{x}}{\sqrt{x}+\sqrt{x-\sqrt{x}}}
\ge \frac12,
\]
so it is enough to choose $M$ so large that
\[
2C_d \le e^{M/2}.
\]
With this choice,
\[
\textnormal{II} \le \frac12 a\,\Phi(n).
\]

\subsection{Estimate for $\textnormal{I}$}

First,
\[
\tau(n)^{-1/2}
=
(K\log n)^{8d^2} e^{2d\sqrt{\log n}}.
\]
Also,
\[
\log \frac1{\tau(n)}
=
16d^2 \log(K\log n) + 4d\sqrt{\log n}
=
O_d(\log\log n + \sqrt{\log n}).
\]
Therefore
\[
P(n)
\le
C_d (\log n)^{C_d} e^{2d\sqrt{\log n}}.
\]
So
\[
\textnormal{I}
\le
C_d (\log n)^{C_d} e^{2d\sqrt{\log n}} + 3 + e^{\sqrt{\log n}}.
\]
Recall that in the induction step we assume $a>A(n)=(K\log n)^{-2d}$.
Thus it is enough to prove
\[
\textnormal{I} \le \frac12 A(n)\Phi(n)
= \frac12 (K\log n)^{-2d} e^{M\sqrt{\log n}}.
\]
Equivalently, after multiplying by $(K\log n)^{2d}$, it suffices to show
\[
C_d (K\log n)^{2d}(\log n)^{C_d} e^{2d\sqrt{\log n}}
+
(3+e^{\sqrt{\log n}})(K\log n)^{2d}
\le
\frac12 e^{M\sqrt{\log n}}.
\]
Now every fixed power of $\log n$ is negligible compared with $e^{\varepsilon \sqrt{\log n}}$ for any fixed $\varepsilon>0$.
Hence, if $M$ is chosen sufficiently large compared with $d$ and the implicit constants in $C_d$, we obtain
\[
\textnormal{I} \le \frac12 a\,\Phi(n).
\]
This completes the deferred calculations.

\section{Boundary geometry from $\bsa$}
\label{boundary}

The total influence of a Boolean-valued function counts the fraction of hypercube edges on the boundary between $A:=f^{-1}(-1)$ and $A^c=f^{-1}(1)$.
For two functions $f,g$ with the same total influence, their $\bsa$ values may differ significantly, and this variation reveals information about their \emph{vertex} boundary.
This is not too surprising, as $\bsa$ is the $1/2$-moment of the pointwise sensitivity:
for a fixed Boolean function $f$, the quantity $\sens{f}(x)$ is the number of sensitive edges attached to the vertex $x$.
Then
\[
\E_{x\sim\{0,1\}^n}[\sqrt{\sens{f}(x)}]=\bsa[f].
\]
Together with $\Inf[f]$ this is not enough to determine the \emph{size} of the vertex boundary
\[
\partial_{\mathrm{vert}}f:=\#\{x:\sens{f}(x)>0\}=2^n\Pr_{x\sim\{0,1\}^n}[\sens{f}(x)>0],
\]
but it does give some partial information.

For example,
\[
\var\!\bigl(\sqrt{\sens{f}(x)}\bigr)=\Inf[f]-\bsa[f]^2,
\]
so holding $\Inf[f]$ fixed, functions with smaller $\bsa$ have much more variance in their vertex sensitivities.
Another interpretation is as follows.

\begin{prop}
\label{prop:vertex-bdry}
Consider choosing a uniformly random edge $e$ from the boundary between $A$ and $A^{c}$, then from $e$ choosing either incident vertex $x$ with probability $1/2$.
Then
\[
\Pr_{e\sim \partial A,\,x\sim e}\left[\sens{f}(x)\geq\frac{\Inf[f]^2}{4 \bsa[f]^2}\right]\geq \frac12.
\]
\end{prop}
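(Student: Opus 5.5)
The plan is to identify the law of $\sens{f}(x)$ under the edge-based sampling, and then apply a Markov-type argument. Sampling a uniformly random boundary edge $e$ and then a uniformly random endpoint $x$ of $e$ produces a vertex $x$ with probability proportional to the number of boundary edges at $x$, i.e.\ proportional to $\sens{f}(x)$. Concretely, for every vertex $v$,
\[
\Pr_{e,x}[x=v]=\frac{\sens{f}(v)}{\sum_{u}\sens{f}(u)}=\frac{\sens{f}(v)}{2^n\,\Inf[f]},
\]
since $\sum_u \sens{f}(u)=2^n\,\E\,\sens{f}=2^n\Inf[f]$ counts each boundary edge twice, matching the factor $1/2$ in the endpoint choice. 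Hence for any event $S\subseteq\{-1,1\}^n$ we get the size-biased formula
\[
\Pr_{e,x}[x\in S]=\frac{\E[\sens{f}\,\mathbf 1_S]}{\Inf[f]}.
\]

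Set $t:=\Inf[f]^2/(4\bsa[f]^2)$ and take $S^c=\{\sens{f}<t\}$. It suffices to show $\E[\sens{f}\mathbf 1_{\sens{f}<t}]\le \tfrac12\Inf[f]$. On this event we have $\sens{f}=\sqrt{\sens{f}}\cdot\sqrt{\sens{f}}< \sqrt t\,\sqrt{\sens{f}}$. Therefore
\[
\E[\sens{f}\mathbf 1_{\sens{f}<t}]\le \sqrt t\,\E\sqrt{\sens{f}}=\frac{\Inf[f]}{2\bsa[f]}\cdot\bsa[f]=\frac{\Inf[f]}2.
\]
Dividing by $\Inf[f]$ shows that the complement has probability at most $1/2$, which is the claim.

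The only things to check are degenerate cases. If $f$ is constant, the boundary is empty, so the statement is vacuous (or we assume $\Inf[f]>0$); in that case $\bsa[f]>0$ too, so $t$ is well-defined. I expect no real obstacle. The main point is recognizing the size-biasing identity; after that, a single Cauchy–Schwarz-free pointwise bound finishes the argument.
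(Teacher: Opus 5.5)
Your proposal is correct and follows the paper's proof: the paper also writes the edge-sampled probability as the size-biased ratio $\E[\sens{f}\mathbf{1}_{\{\sens{f}\le T\}}]/\E\,\sens{f}$. It then bounds the numerator by $\sqrt{T}\,\bsa[f]$ using $\sens{f}\le\sqrt{T}\sqrt{\sens{f}}$ on that event and substitutes $T=\Inf[f]^2/(4\bsa[f]^2)$, exactly as you do.
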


A ``typical'' edge will thus be incident to a highly sensitive vertex when $\bsa$ is small.
One may think about the special case of $\mathrm{MAJ}_n$ vs. $\chi_{\{1,\ldots, \sqrt n\}}$; see Fig.~\ref{fig:majvspar}.

\begin{proof}[Proof of Proposition~\ref{prop:vertex-bdry}]
One computes:
\begin{align*}
\Pr_{\substack{e\sim \partial A\\x\sim e}}[\sens{f}(x)\le T]
&=\frac{\sum_x \sens{f}(x)\mathbf 1_{\{\sens{f}(x)\leq T\}}}{\sum_x \sens{f}(x)}
=\frac{\mathbf{E}\!\left[\sens{f}(x)\,\mathbf{1}_{\{\sens{f}(x)\le T\}}\right]}{\mathbf{E}\,\sens{f}(x)}\\
&\le \frac{\mathbf{E}\!\left[\sqrt{T}\,\sqrt{\sens{f}(x)}\,\mathbf{1}_{\{\sens{f}(x)\le T\}}\right]}{\mathbf{E}\,\sens{f}(x)}
\le \frac{\sqrt{T}\,\mathbf{E}\sqrt{\sens{f}(x)}}{\mathbf{E}\,\sens{f}(x)}\\
&=\frac{\sqrt{T}\,\bsa[f]}{\Inf[f]}\,.
\end{align*}
Substituting for $T$ completes the proof.
\end{proof}

\begin{figure}
    \centering
    \makebox[\textwidth][c]{\includegraphics[width=0.95\textwidth]{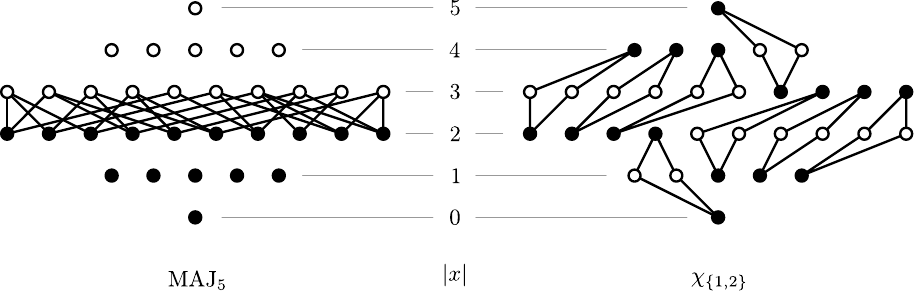}}
    \caption{The boundary of $\mathrm{MAJ}_5$ (left) vs. $\chi_{\{1,\ldots,\lfloor\sqrt 5\rfloor\}}$ (right).
    Points in $\{0,1\}^5$ are arranged according to Hamming weight, and $f(x)=1$ is denoted by $\bullet$ and $-1$ by $\circ$.
    Here $\Inf[\mathrm{MAJ}_5]=1.875$ (generally $\Theta(\sqrt n)$) and $\bsa[\mathrm{MAJ}_5]=5\sqrt{3}/8\approx 1.08$ (generally $\Theta(1)$), while $\Inf[\chi_{\{1,\ldots,\lfloor\sqrt 5\rfloor\}}]=2$ (generally $\Theta(\sqrt n)$) and $\bsa[\chi_{\{1,\ldots,\lfloor\sqrt 5\rfloor\}}]=\sqrt{2}$ (generally $\Theta(n^{1/4})$).
    So although $\mathrm{MAJ}_n$ and $\chi_{\{1,\ldots,\sqrt{n}\}}$ have comparable average sensitivities, the boundary vertices of $\mathrm{MAJ}_n$ will typically be more sensitive because $\bsa$ is small.}
    \label{fig:majvspar}
\end{figure}

In view of these remarks, our bounds on $\bsa$ for PTFs show that PTF vertex boundaries are small and highly sensitive.
Said another way: for PTFs, most inputs are very robust to perturbations or errors, while a small fraction of inputs are extremely sensitive to errors.


\begin{thebibliography}{KKL2017}

\bibitem[Cha18]{C}
B.~Chapman.
\newblock The Gotsman--Linial conjecture is false.
\newblock {\em Proceedings of the Twenty-Ninth Annual ACM-SIAM Symposium on
  Discrete Algorithms, SIAM, Philadelphia, PA}, pages 692--699, 2018.

\bibitem[DHKM+10]{tan}
I.~Diakonikolas, P.~Harsha, A.~Klivans, R.~Meka, P.~Raghavendra, R.~A. Servedio,
and L.~Y. Tan.
\newblock Bounding the average sensitivity and noise sensitivity of polynomial threshold functions.
\newblock {\em Proceedings of the Forty-Second ACM Symposium on Theory of Computing}, pages 533--542, 2010.

\bibitem[EG22]{EG}
Ronen Eldan and Renan Gross.
\newblock Concentration on the {B}oolean hypercube via pathwise stochastic analysis.
\newblock {\em Invent. Math.}, 230(3):935--994, 2022.

\bibitem[EKLM25]{EKLM}
Ronen Eldan, Guy Kindler, Noam Lifshitz, and Dor Minzer.
\newblock Isoperimetric inequalities made simpler.
\newblock {\em Discrete Anal.}, pages Paper No.~7, 23, 2025.

\bibitem[IVHV20]{IVHV}
P.~Ivanisvili, R.~Van Handel, and A.~Volberg.
\newblock Rademacher type and {E}nflo type coincide.
\newblock {\em Ann. of Math. (2)}, 192(2):665--678, 2020.

\bibitem[Kan11]{DKgauss}
Daniel~M. Kane.
\newblock The Gaussian surface area and noise sensitivity of degree-$d$ polynomial threshold functions.
\newblock {\em Comput. Complexity}, 20(2):389--412, 2011.

\bibitem[Kan14]{DK}
Daniel~M. Kane.
\newblock The correct exponent for the {G}otsman--{L}inial conjecture.
\newblock {\em Comput. Complexity}, 23(2):151--175, 2014.

\bibitem[KKL17]{KKL2017}
V.~Kabanets, D.~M.~Kane, and Z.~Lu.
\newblock A polynomial restriction lemma with applications.
\newblock {\em Proceedings of the 49th Annual ACM SIGACT Symposium on Theory of Computing}, pages 615--628, 2017.

\bibitem[KOS08]{KODS}
Adam~R. Klivans, Ryan O'Donnell, and Rocco~A. Servedio.
\newblock Learning geometric concepts via Gaussian surface area.
\newblock In {\em 49th Annual IEEE Symposium on Foundations of Computer Science}, pages 541--550. IEEE, 2008.

\bibitem[Mar74]{M}
G.~A. Margulis.
\newblock Probabilistic characteristics of graphs with large connectivity.
\newblock {\em Problemy Peredachi Informatsii}, 10(2):101--108, 1974.

\bibitem[O'D12]{ODopen}
Ryan O'Donnell.
\newblock Open problems in analysis of Boolean functions.
\newblock {\em arXiv preprint}, arXiv:1204.6447, 2012.

\bibitem[Per20]{YP}
Yuval Peres.
\newblock Noise stability of weighted majority.
\newblock In {\em In and out of equilibrium 3. Celebrating Vladas Sidoravicius}, volume~77 of {\em Progr. Probab.}, pages 677--682. Birkhauser/Springer, Cham, [2020] \copyright 2021.

\bibitem[Tal93]{T}
M.~Talagrand.
\newblock Isoperimetry, logarithmic {S}obolev inequalities on the discrete cube, and {M}argulis' graph connectivity theorem.
\newblock {\em Geom. Funct. Anal.}, 3(3):295--314, 1993.

\end{thebibliography}
\end{document}